\def\Sp{\operatorname{Sp}}
\def\Fast{\operatorname{Fast}}
\def\Reg{\operatorname{Reg}}
\theoremstyle{plain}
\newtheorem{thm}{Theorem}
\newtheorem{defin}{Definition}[section]
\newtheorem{assum}[defin]{Assumption}
\newtheorem{prop}[defin]{Proposition}
\newtheorem{cl}[defin]{Claim}
\newtheorem{rmk}[defin]{Remark}
\newtheorem{cor}[defin]{Corollary}
\newtheorem{lemma}[defin]{Lemma}
\newtheorem*{thm*}{Theorem}
\begin{document}

\title{Lower bounds on Anderson-localised eigenfunctions on a strip}
\author{Ilya Goldsheid and Sasha Sodin\footnote{School of Mathematical Sciences, Queen Mary University of London, London E1 4NS, United Kingdom. Email: \{i.goldsheid, a.sodin\}@qmul.ac.uk. The second author is supported in part by the European Research Council starting grant 639305 (SPECTRUM) and by a Royal Society Wolfson Research Merit Award.}}
\maketitle

\begin{abstract} It is known that the eigenfunctions of a random Schr\"odinger operator on a strip decay exponentially, and that the rate of decay is not slower than prescribed by the slowest Lyapunov exponent. A variery of heuristic arguments suggest that no eigenfunction can decay faster than at this rate. We make a step towards this conjecture (in the case when the distribution of the potential is regular enough) by showing that, for each eigenfunction, the rate of exponential decay along any subsequence is strictly slower than the fastest Lyapunov exponent, and that there exists a subsequence along which it is equal to the slowest Lyapunov exponent.
\end{abstract}

\section{Introduction}
Let $W \geq 1$, and let $V(n)$, $n \geq 0$, be independent, identically distributed  random variables taking values in the space of $W \times W$ real symmetric matrices, so that
\begin{equation}\label{eq:model} \mathbb E \| V(n)\|^\eta   < \infty \quad \text{for some $\eta > 0$,}\end{equation}
and the support $\mathcal S$ of the distribution of $V(n)$ is sufficiently rich, say, in the following sense:
 \begin{equation}\label{eq:jcond}\left[\begin{split}
&\text{$\mathcal S$ is irreducible (i.e.\ does not preserve any non-trivial linear subspace of $\mathbb R^W$)}\\
&\text{and contains $V, V'$ such that $\operatorname{rk}(V - V') = 1$}\end{split}\right.
\end{equation}
The main example (the Schr\"odinger case) is
\begin{equation}\label{eq:strip} V(n)_{\alpha,\alpha'} = \begin{cases}
1~, &|\alpha-\alpha'| = 1 \\
v_{n,\alpha}~, &\alpha = \alpha'
\end{cases} \qquad \text{(and $0$ otherwise),}\end{equation}
where $\{v_{n,\alpha}\}_{n \in \mathbb Z_+, \alpha \in \{1,\cdots,q\}}$ are independent, identically distributed real-valued random variables not concentrated at one point and having $\mathbb E  |v_{n,\alpha}|^\eta<\infty$.

We are interested in the spectral properties of the random
operator $H$ on $\ell_2(\mathbb Z_+ \to \mathbb C^W)$, defined as follows:
\[ (H \psi)(n) = \begin{cases}
\psi(n+1) + V(n) \psi(n) + \psi(n-1)~, & n \geq 1\\
\psi(1) + V(0) \psi(0)~, &n = 0~.
\end{cases}\]
On an event of full probability, $H$ exhibits Anderson localisation which manifests itself in the following spectral properties: the spectrum of $H$ is pure point, and the eigenfunctions decay exponentially, meaning that there exists a deterministic $\gamma > 0$ such that for each eigenfunction $\psi$ of $H$
\begin{equation}\label{eq:expdecay} \limsup_{n \to \infty} \frac{1}{n} \log \|\psi(n)\| \leq - \gamma~. \end{equation}
where $\| \cdot \|$ denotes the Euclidean norm in $\mathbb R^W$.

For $W=1$, the pure point nature of the spectrum was first established in \cite{GMP}, and exponential decay -- by Molchanov in \cite{Molch}; see further Kunz and Souillard \cite{KS}. In these works, it was assumed that the distribution of the potential is absolutely continuous with bounded density, The case of singular potentials was settled by Carmona, Klein, and Martinelli \cite{CKM}. For $W>1$ (Schr\"odinger case) with absolutely continuous distribution of the potential, the pure point nature of the spectrum  was first proved in \cite{G80}, and  exponential decay -- by Lacroix in  \cite{Lacr,Lacr2}. The general Schr\"odinger case was settled by Klein, Lacroix and Speis in \cite{KLS}, building on \cite{GM}. The argument of \cite{KLS}  can be extended to the general situation (\ref{eq:jcond}), once the result of \cite{G95} (discussed below) is taken into account; an alternative argument avoiding multi-scale analysis and applicable to the general model (\ref{eq:model}) (and also to its further generalisation allowing for random hopping) is given in \cite{MS}. In this paper, we do not discuss Anderson localisation in dimension $d > 1$, and refer to the works of Fr\"ohlich and Spencer \cite{FS} and Aizenman and Molchanov \cite{AM} and also to the monograph of Aizenman and Warzel \cite{AW}.

\medskip A more precise version of the relation (\ref{eq:expdecay}) can be stated in terms of the Lyapunov exponents associated with $H$. For  $\lambda \in \mathbb R$, define the one-step transfer matrices
\[ T_n (\lambda) = \left( \begin{array}{cc} \lambda - V(n) & - \mathbbm 1 \\ \mathbbm 1 & 0 \end{array} \right) \in \Sp(2W, \mathbb R) \qquad (n \geq 0) \]
and  the multi-step transfer matrices
\[ \Phi_{n, n'}(\lambda) =  T_{n-1}(\lambda) \cdots T_{n'}(\lambda)~, \quad \Phi_n(\lambda) = \Phi_{n,0}(\lambda) \qquad (n > n' \geq 0)~. \]
The Lyapunov exponents $\gamma_1(\lambda) \geq \gamma_2(\lambda) \geq \cdots \geq \gamma_{2W}(\lambda)$ are defined as
\[ \gamma_j(\lambda) = \lim_{n \to \infty} \frac1n \mathbb E \log s_j(\Phi_n(\lambda))~, \]
where $s_j$ stands for the $j$-th singular value. According to a  general result of Furstenberg and Kesten \cite{FK}, one
 has
\begin{equation}\label{eq:fk}
\forall \lambda \in \mathbb R \,\,\,\,\, \mathbb P \left\{ \gamma_j(\lambda) = \lim_{n \to \infty} \frac1n  \log s_j(\Phi_n(\lambda)) \right\} = 1~.\end{equation}
Due to the symplectic structure, $\gamma_{2W+1-j}(\lambda) = - \gamma_j(\lambda)$ for $j =1, \cdots, W$.

Following precursory work by Tutubalin (see the survey \cite{SazTut}) and Virtser \cite{Vir}, Guivarc$'$h and Raugi showed \cite{GR} that if
\begin{equation}\label{eq:cond}
\left[ \begin{split}
&\text{the action of the semigroup generated by the support $\mathcal S_\lambda$ of $T_n(\lambda)$ } \\
&\text{on $\mathbb R^{2W}$ and its wedge powers is strongly irreducible and contractive,}
\end{split} \right.
\end{equation}
 then the Lyapunov exponents are distinct:
\begin{equation}\label{eq:simplespec} \gamma_1(\lambda) > \gamma_2(\lambda) \cdots >\gamma_W(\lambda) > 0~. \end{equation}
In the case (\ref{eq:strip}) with absolutely continuous distribution of $v_{n,\alpha}$, the condition (\ref{eq:cond}) was verified in \cite{Lacr}, while in \cite{G80} (\ref{eq:simplespec}) was directly established using the results of \cite{SazTut}. In \cite{GM}, the following general theorem is proved: (\ref{eq:cond}) (and consequently also (\ref{eq:simplespec})) holds  if
\begin{equation}\label{eq:zcond}
\text{the group generated by $\mathcal S_\lambda$ is Zariski-dense in $\Sp(2W, \mathbb R)$}.
\end{equation} It was also shown in \cite{GM}  that in the Schr\"odinger case (\ref{eq:strip}) one has (\ref{eq:zcond}) for any $\lambda \in \mathbb R$. In \cite{G95}, a general method to compute the Zariski closure of the group generated by the support of  $T_n(\lambda)$  was developed; one of its consequences is that (\ref{eq:zcond}) holds for any $\lambda \in \mathbb R$ also in the generality of (\ref{eq:jcond}).

Now we can state the full result of Klein, Lacroix and Speis \cite{KLS} (in the current setting, covered by \cite{MS}): there is an event of full probability on which each eigenpair $H\psi = \lambda \psi$ satisfies
\begin{equation}\label{eq:upperbd}  \limsup_{n \to \infty} \frac1n \log \|\psi(n)\|  \leq - \gamma_W(\lambda)~. \end{equation}
A variety of heuristic arguments indicate that (\ref{eq:upperbd}) should be sharp in the following  strong sense: there is  an event of full probability on which  each eigenpair $H\psi = \lambda \psi$ satisfies
\begin{equation}\label{eq:conjlowerbd}  \text{(conjecture)}\qquad  \liminf_{n \to \infty} \frac1n \log (\| \psi(n)\| + \|\psi(n+1)\|)  \geq - \gamma_W(\lambda)~, \end{equation}
which, in conjuction with (\ref{eq:upperbd}), implies the existence of a limit equal to $-\gamma_W(\lambda)$.
For example,  the Fermi Golden Rule leads one to believe that eigenfunctions violating (\ref{eq:conjlowerbd}) are unstable under perturbation. From the point of view of random matrix products, an eigenfunction decaying at a rate faster than $\gamma_W$ indicates a non-generic intersection between the $W$-dimensional space of initial conditions with the $W$-dimensional Oseledec subspace of decaying solutions in $\mathbb R^{2W}$.

The relation (\ref{eq:conjlowerbd}) was repeatedly conjectured at least since the 1980s; however, we are not aware of any rigorous results improving on the trivial bound
\begin{equation}\label{eq:trivlowerbd}\liminf_{n \to \infty} \frac1n \log (\| \psi(n)\| + \|\psi(n+1)\|) \geq - \gamma_1(\lambda)  \end{equation}
(which follows from a general result of Craig and Simon \cite{CS}, or from its quantitative version, stated as Lemma~\ref{l:upperbd} below). The main difficulty comes from the fact that, although for each fixed energy $\lambda$ the probability to have an  eigenfunction which decays at a rate faster than $\gamma_W(\lambda)$ is zero, one can not use the union bound over the uncountable set of all real $\lambda$.

\medskip In this paper we make a step towards (\ref{eq:conjlowerbd}) by improving upon (\ref{eq:trivlowerbd}) (in the case when the distribution of potential is regular enough). To state the results precisely, we introduce some notation. Let $\mathcal E(H) = \{ (\lambda, \psi) \}$ be the collection of eigenpairs of $H$, with the normalisation $\|\psi(0)\|=1$ (the choice of the sign is not important for us, and spectral multiplicity is known to be a null event). For $\gamma > 0$ and a bounded interval $I \Subset \mathbb R$, consider the two realisation-dependent sets:
\begin{equation}\begin{split}
\Fast^+(\gamma; I) &= \left\{ \lambda \in I: \exists  (\lambda, \psi) \in \mathcal E(H),\,    \liminf_{n \to \infty}  \frac{\log (\| \psi(n)\| + \|\psi(n+1)\|) }n   \leq - \gamma \right\}~,\\
\Fast^-(\gamma; I) &= \left\{ \lambda \in I: \exists  (\lambda, \psi) \in \mathcal E(H),\,   \limsup_{n \to \infty} \frac{\log (\| \psi(n)\| + \|\psi(n+1)\|) }n   \leq - \gamma \right\} ~.\end{split}\label{eq:deffast}\end{equation}
These sets consist of the eigenvalues for which the corresponding eigenvector decays at rate $\geq \gamma$ (along a subsequence, or uniformly). We note that there is no simple way to define the  sets as random variables on the underlying probability space (see Kendall \cite{Kendall} and Tsirelson \cite{Tsir} for possible frameworks to address such questions); this does not cause problems since we  only work with the measureable events  $\{ \Fast^\pm(\gamma; I) = \varnothing \}$ and $\{ \Fast^\pm(\gamma; I) \neq  \varnothing \}$ (which in fact lie in the tail $\sigma$-algebra).

For $\lambda$ in the spectrum $\sigma(H)$ of $H$, define the deterministic quantities
\begin{equation}\label{eq:degamma}\gamma_*^\pm(\lambda) = \inf \Big\{ \gamma > 0 : \, \exists r > 0~, \, \mathbb P \left\{ \Fast^\pm(\gamma, (\lambda - r, \lambda + r)) \neq \varnothing \right\} = 0 \Big\} ~. \end{equation}
Roughly speaking, the functions $\gamma_*^\pm(\lambda)$ measure  the fastest possible decay of an eigenfunction in the vicinity of $\lambda$ (recall that $\gamma_j(\lambda)$ are continuous, cf.\ below, and that  $\sigma(H)$ is almost surely equal to a deterministic set). In this notation,  (\ref{eq:upperbd}) and (\ref{eq:trivlowerbd})  imply that
\begin{equation}\label{eq:sandwich}\gamma_1(\lambda) \geq \gamma_*^{+}(\lambda)  \geq \gamma_*^{-}(\lambda)  \geq \gamma_W(\lambda)~,\end{equation}
whereas the conjecture (\ref{eq:conjlowerbd}) stipulates  that the last two inequalities are in fact equalities: $\gamma_*^\pm \overset{\text{\tiny conj}}{=} \gamma_
W$. The results below show that the first inequality in (\ref{eq:sandwich}) is strict (for $W \geq 2$), whereas the last one is an equality, at least, if one asumes

\medskip
\begin{assum}\label{assum} (a) The distribution  of $V(n)$ is compactly supported on a real-analytic submanifold $\mathcal M$ in the space of symmetric $W \times W$ matrices, and is absolutely continuous with bounded density with respect to the $(\dim \mathcal M)$-dimensional Lebesgue measure on $\mathcal M$; (b) for each $\lambda \in \mathbb R$ the image of $\mathcal M$ under
\[ V \mapsto  \left( \begin{array}{cc} \lambda - V & - \mathbbm 1 \\ \mathbbm 1 & 0 \end{array} \right) \]
generates $\Sp(2W, \mathbb R)$ as a Lie group.
\end{assum}

\begin{rmk} Assumption~\ref{assum} implies both  (\ref{eq:model}) and  (\ref{eq:zcond}).
\end{rmk}

\begin{rmk}In the Schr\"odinger case (\ref{eq:strip}), Assumption~\ref{assum} is satisfied if the random variables $v_{n,\alpha}$ are bounded and their distribution is absolutely contiunous with bounded density (see \cite[Section 1.4]{Lacr3}).\end{rmk}

\begin{thm}\label{thm:1}
Let $W \geq 3$. If Assumption~\ref{assum} holds, then $\gamma_*^{+}(\lambda) \leq \gamma_{*,1}(\lambda)$ for $\lambda \in \sigma(H)$, where $\gamma_{*,1}(\lambda)$ is the unique solution of the equation
\begin{equation}\label{eq:thm1} \big( (W-1) \gamma - \sum_{j=1}^{W-1}\gamma_j(\lambda)\big)_+ + \gamma = \gamma_1(\lambda)~. \end{equation}
\end{thm}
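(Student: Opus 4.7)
My approach is by contradiction. Assume there exists $\gamma > \gamma_{*,1}(\lambda)$ such that for every $r > 0$ the event $\{\Fast^+(\gamma; (\lambda - r, \lambda + r)) \neq \varnothing\}$ has positive probability; then with positive probability one finds an eigenpair $(\lambda', \psi)$ with $\|\psi(0)\| = 1$, $\lambda' \in (\lambda - r, \lambda + r)$, and a random subsequence $n_k \to \infty$ along which $\|\psi(n_k)\| + \|\psi(n_k+1)\| \leq e^{-\gamma n_k + o(n_k)}$. The goal is to derive a contradiction via a wedge-volume bound combined with a uniform large-deviation estimate.

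First I would reformulate the decay at the level of transfer matrices. Setting $\Psi(n) = (\psi(n+1), \psi(n))^T$, the boundary condition places $\Psi(0)$ in the $W$-dimensional Lagrangian subspace $L_0(\lambda') := \{((\lambda' - V(0))u, u)^T : u \in \mathbb R^W\} \subset \mathbb R^{2W}$ with $\|\Psi(0)\| \geq 1$, so that $\|\Phi_{n_k}(\lambda')\Psi(0)\| \leq e^{-\gamma n_k + o(n_k)}$. Writing $L_0(\lambda') = \mathbb R\Psi(0) \oplus L_0'$ orthogonally and applying the standard factorisation of the $W$-volume together with the quantitative Furstenberg--Kesten upper bound $\|\wedge^{W-1}\Phi_n(\lambda')\| \leq e^{n\sum_{j=1}^{W-1}\gamma_j(\lambda) + o(n)}$ (uniform in $\lambda'$ near $\lambda$ by continuity of the Lyapunov exponents and by Lemma~\ref{l:upperbd}) yields
\[ \|\wedge^W \Phi_{n_k}(\lambda')|_{L_0(\lambda')}\| \leq \exp\!\Big( n_k\big(-\gamma + \textstyle\sum_{j=1}^{W-1}\gamma_j(\lambda)\big) + o(n_k) \Big). \]
Since $L_0(\lambda)$ is a generic Lagrangian (by Assumption~\ref{assum} and \cite{GR}), the typical growth rate of $\|\wedge^W \Phi_n(\lambda)|_{L_0(\lambda)}\|$ equals the top Lyapunov exponent $\sum_{j=1}^W \gamma_j(\lambda)$ of $\wedge^W \Phi_n$, so the above is a large deviation below the typical rate by $\gamma + \gamma_W(\lambda)$ in log.

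The heart of the proof is to rule out such a deviation occurring at \emph{any} $\lambda' \in (\lambda - r, \lambda + r)$ with non-negligible probability when $\gamma > \gamma_{*,1}(\lambda)$. The map $\lambda' \mapsto \wedge^W \Phi_n(\lambda')|_{L_0(\lambda')}$ is polynomial in $\lambda'$ of degree $\lesssim Wn$, and Lemma~\ref{l:upperbd} gives $\|\Phi_n(\lambda')\| \leq e^{n \gamma_1(\lambda) + o(n)}$ uniformly, so one can discretise $\lambda'$ at scale $\sim e^{-n \gamma_1(\lambda)}$ and apply a union bound over $\sim e^{n \gamma_1(\lambda)}$ sample points. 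Combining this with a pointwise large-deviation estimate for the wedge cocycle acting on Lagrangian initial conditions (available under Assumption~\ref{assum} via the standard theory of random matrix products), the net failure probability is bounded by $\exp(-n(R(\gamma) - \gamma_1(\lambda)) + o(n))$, where one aims to achieve $R(\gamma) = ((W-1)\gamma - \sum_{j=1}^{W-1} \gamma_j(\lambda))_+ + \gamma$. The defining equation for $\gamma_{*,1}$ is precisely the threshold $R(\gamma) = \gamma_1(\lambda)$: the union bound closes whenever $\gamma > \gamma_{*,1}(\lambda)$, and Borel--Cantelli along the subsequence $n_k$ yields the contradiction.

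The main obstacle is to establish the correct LD rate $R(\gamma)$ on Lagrangian initial conditions uniformly in $\lambda'$, so that it matches the discretisation cost $\gamma_1(\lambda)$ exactly and the critical threshold emerges as $\gamma_{*,1}$ rather than something coarser. The two contributions to $R$ have a natural interpretation: $\gamma$ comes from the prescribed decay of $\Psi(0)$ itself, while $(W-1)\gamma - \sum_{j=1}^{W-1}\gamma_j(\lambda)$ measures the excess deviation of the $(W-1)$-dimensional complement $L_0'$ below its typical growth rate. The hypothesis $W \geq 3$ is precisely what makes $\gamma_{*,1} < \gamma_1$ strictly, yielding a non-trivial improvement of the Craig--Simon bound.
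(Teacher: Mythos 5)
Your proposal reverse-engineers the correct threshold $\gamma_{*,1}$, but the intermediate steps contain compensating errors, and the key probabilistic input is asserted rather than derived.

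\textbf{Discretisation scale is too coarse.} You propose to discretise $\lambda'$ at scale $\sim e^{-n\gamma_1(\lambda)}$. Since $\|\partial_{\lambda'}\Phi_n(\lambda')\| \lesssim n\,e^{n\gamma_1(\lambda)+o(n)}$, moving $\lambda'$ by $e^{-n\gamma_1(\lambda)}$ perturbs $\Phi_n(\lambda')$ by $O(n)$, which completely destroys an exponentially small bound $\|\Phi_n(\lambda')v\| \leq e^{-n\gamma}$; and moving it by that amount perturbs $\wedge^W\Phi_n(\lambda')$ by $\sim n\,e^{n(\gamma_2+\cdots+\gamma_W)}$, which likewise swamps the wedge-volume threshold $e^{n(\sum_{j<W}\gamma_j-\gamma)}$ unless $\gamma \leq \gamma_1 - \gamma_W$. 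The stable propagation radius for the event $\{\exists v\in S(F_0):\|\Phi_n v\|\leq e^{-n\gamma}\}$ is $\eta \sim \frac1n e^{-n(\gamma+\gamma_1(\lambda))}$, which is what the paper uses (Claim~\ref{cl:2}); the cost of covering an interval of fixed length by $\eta$-balls is then $\eta^{-1} \sim n\,e^{n(\gamma+\gamma_1)}$, not $e^{n\gamma_1}$.

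\textbf{The single-energy rate is asserted, not derived, and is off by $\gamma$.} You claim that the probability of the bad event at a fixed $\lambda''$ is $\lesssim e^{-nR(\gamma)}$ with $R(\gamma) = \gamma + \bigl((W-1)\gamma - \sum_{j<W}\gamma_j\bigr)_+$, attributing this to ``the standard theory of random matrix products.'' In fact the correct single-energy bound (Claim~\ref{cl:1}) is $\lesssim \exp\bigl(-n[\,2\gamma + ((W-1)\gamma-\sum_{j<W}\gamma_j)_+\,]\bigr)$, which has an extra $\gamma$ in the exponent; it is precisely this extra $\gamma$ that compensates for the tighter $\eta$. Deriving this bound is the analytic heart of the proof: it requires (i)~the smoothing Lemma~\ref{l:smooth}, which under Assumption~\ref{assum} makes the rotation part of the SVD of the transfer matrix uniformly distributed on the compact symplectic group, and (ii)~the geometric Lemma~\ref{l:geom}, which uses Archimedes' formula~\eqref{eq:archimedes} to bound the probability that a uniformly rotated $W$-dimensional Lagrangian subspace contains a vector strongly contracted by a prescribed diagonal matrix. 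Neither of these appears in your sketch, and there is no off-the-shelf large-deviation estimate of the form you assert. The fact that your $R(\gamma)$ combined with your $e^{n\gamma_1}$ discretisation produces the correct threshold $R(\gamma_{*,1}) - \gamma_1 = 0$ is exactly because the two errors cancel.

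\textbf{The wedge-volume reduction discards information.} The implication $\{\exists v\in S(L_0):\|\Phi_n v\|\leq e^{-n\gamma}\} \subset \{\|\wedge^W\Phi_n|_{L_0}\|\leq e^{n(\sum_{j<W}\gamma_j-\gamma)}\}$ goes the wrong way for your purposes: the latter event is strictly larger, and its large-deviation rate for a fixed Lagrangian is a priori smaller than what you need. The paper instead works directly with the contracted-direction event and extracts the rate from the geometry of the sphere, which is why the answer has the $(\cdot)_+$ piecewise-linear structure rather than a generic Legendre transform.

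Your structural outline (propagation estimate + single-energy bound + Fubini/Borel--Cantelli, as in the Kakutani/Aizenman--Spencer scheme described in the introduction) matches the paper's, and the interpretation of the two contributions to $R(\gamma)$ at the end of your proposal is essentially correct. But a complete proof requires the density and geometric lemmas to establish the single-energy bound, and a corrected propagation radius; without these, the argument does not close.
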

Here $x_+ = \max(x, 0)$. We observe that for $W \geq 3$ $\gamma_{*,1}(\lambda) < \gamma_1(\lambda)$, hence (\ref{eq:thm1}) indeed improves on (\ref{eq:trivlowerbd}).  For $W = 2$, $\gamma_{*,1} = \gamma_1(\lambda)$; however, we prove
\begin{thm}\label{thm:2}
Let $W  = 2$. If Assumption~\ref{assum} holds, then $\gamma_*^{+}(\lambda) \leq  \frac{2 \gamma_{1}(\lambda) + \gamma_2(\lambda)}{3}$ for all $\lambda \in \sigma(H)$.
\end{thm}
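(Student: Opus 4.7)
Fix $\lambda_0\in\sigma(H)$ and write $\gamma_i=\gamma_i(\lambda_0)$. Take $\gamma>(2\gamma_1+\gamma_2)/3$; by continuity of the Lyapunov exponents, for small enough $r>0$ the same strict inequality holds uniformly for every $\lambda\in I=(\lambda_0-r,\lambda_0+r)$. The conclusion $\gamma^+_*(\lambda_0)\le(2\gamma_1+\gamma_2)/3$ will follow if we establish $\mathbb P(\Fast^+(\gamma,I)\ne\varnothing)=0$. I would prove this by a Borel--Cantelli argument along the dyadic scales $N=2^k$: it suffices to show $\sum_k\mathbb P(A_{2^k})<\infty$, where
\[
A_N=\bigl\{\exists\,(\lambda,\psi)\in\mathcal E(H),\ \lambda\in I,\ n\in[N,2N]\colon \|\psi(n)\|+\|\psi(n+1)\|\le e^{-\gamma n}\bigr\}.
\]
Setting $v_k=(\psi(k),\psi(k-1))^T\in\mathbb R^4$ with $\psi(-1):=0$, the transfer matrices of the paper satisfy $v_k=\Phi_k(\lambda)v_0$; on $A_N$ one has $v_0\in E^{\mathrm{ic}}:=\{(u,0)^T:u\in\mathbb R^2\}$ with $\|v_0\|=\|\psi(0)\|=1$ and $\|v_n\|\le 2e^{-\gamma n}$, and, since $\psi\in\ell_2$, $v_n$ also lies in the two-dimensional stable subspace $E^s_n(\lambda)$.

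The core of the plan is a three-scale decomposition
\[
\Phi_n(\lambda)=\Phi_{n,n_2}(\lambda)\cdot\Phi_{n_2,n_1}(\lambda)\cdot\Phi_{n_1}(\lambda),\qquad n_1=\lfloor\alpha_1 n\rfloor,\ n_2=\lfloor\alpha_2 n\rfloor,\ 0<\alpha_1<\alpha_2<1,
\]
whose three factors depend on disjoint portions of the potential and are therefore independent. By the Le~Page--Guivarc'h large-deviation theorem for symplectic random matrix products (available under Assumption~\ref{assum}), with probability $1-e^{-cn}$ the middle factor $\Phi_{n_2,n_1}(\lambda)$ has all four singular values equal to $e^{\pm\gamma_j(n_2-n_1)(1+o(1))}$ ($j=1,2$) uniformly in $\lambda\in I$, with singular directions close to the deterministic Oseledec subspaces. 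Conditioning on this typical event, the inequality $\|\Phi_n v_0\|\le 2e^{-\gamma n}$ forces the intermediate vector $v_{n_1}=\Phi_{n_1}(\lambda)v_0$ into a narrow cone of angular opening $\lesssim e^{-(\gamma-\gamma_2)(n-n_1)+o(n)}$ around the one-dimensional slowest-contracting direction of the future block $\Phi_{n,n_1}(\lambda)$; the existence of an eigenvalue in $I$ imposes in addition the transversality condition that $\Phi_n E^{\mathrm{ic}}$ meet the stable subspace $E^s_n$.

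To turn these geometric constraints into a probability bound, I would apply a Wegner-type spectral-averaging estimate with respect to $V(0)$ (permitted by the absolute continuity in Assumption~\ref{assum}, and analogous to the argument of Klein--Lacroix--Speis~\cite{KLS}). This converts the simultaneous requirements that (i)~$\lambda\in\sigma(H)\cap I$ and (ii)~$v_{n_1}$ fall in the narrow cone into an estimate of the form $\exp(nF(\alpha_1,\alpha_2;\gamma,\gamma_1,\gamma_2))$ for each candidate eigenvalue, with $F$ assembled from the large-deviation exponents on the three blocks. Summing over the $O(n)$ candidate eigenvalues in $I$ (via the Wegner bound for finite-volume operators) and over $n\in[N,2N]$ yields a bound of the same exponential form for $\mathbb P(A_N)$. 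A direct optimisation then shows that $\min_{\alpha_1,\alpha_2}F<0$ exactly when $3\gamma>2\gamma_1+\gamma_2$, with the minimiser attained at $(\alpha_1,\alpha_2)=(1/3,2/3)$ -- the source of the fraction in the claimed threshold.

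The hard part will be the third step: executing the spectral averaging uniformly in $\lambda\in I$ while retaining the full exponential gain from alignment of $v_{n_1}$ with a one-dimensional, future-measurable direction. The difficulty is specific to $W=2$: the $W$-dimensional subspaces $E^{\mathrm{ic}}$ and $E^s_n$ sit inside $\mathbb R^{2W}$ and generically meet in a single line, so there is no transversality margin left over to be exploited by the two-block argument that succeeds when $W\ge 3$. It is precisely this absence of margin that forces the three-block decomposition and the extra averaging input, and that pins the threshold at $(2\gamma_1+\gamma_2)/3$ rather than at the larger value $\gamma_{*,1}$ of Theorem~\ref{thm:1} (which degenerates to $\gamma_1$ when $W=2$).
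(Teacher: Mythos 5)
Your proposal is a plan, not a proof, and you explicitly flag the decisive step (the uniform-in-$\lambda$ spectral averaging retaining the full alignment gain) as unexecuted; as written there is a genuine gap there. More importantly, the plan misses the mechanism that actually produces the exponent $(2\gamma_1+\gamma_2)/3$ in the paper. The paper does \emph{not} use a three-block decomposition, Wegner averaging, eigenvalue counting, or any optimisation over block sizes. It works purely with the set $\Fast_{n,\epsilon}(\gamma,\lambda)$ of energies $\lambda'$ for which some $v\in S(F_0)$ satisfies $\|\Phi_n(\lambda')v\|\le e^{-n\gamma}$ (so the eigenvalue condition $\lambda\in\sigma(H)$ is never invoked), together with the Kakutani-style Chebyshev--Fubini scheme stated in the introduction: a propagation estimate at scale $\eta$ plus a single-energy bound.

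The key geometric input, entirely absent from your proposal, is symplectic: $F_0$ is Lagrangian, and for a symplectic matrix the right singular vectors satisfy $Ju_4(\lambda')=u_1(\lambda')$ up to sign. Given a fast-decaying $v\in S(F_0)$, the inequality $|\langle v,u_j\rangle|\le e^{-n\gamma}/s_j$ for $j=1,2,3$ forces $v$ to be $\exp(-n(2\gamma-\gamma_1-\gamma_2-O(\epsilon)))$-close to $u_4$; applying $J$ and using $Jv\perp F_0$ converts this into smallness of $\|P_{F_0}u_1(\lambda')\|$. Since $u_1(\lambda'')$ is (by Lemma~\ref{l:smooth}) essentially uniformly distributed on $S(\mathbb R^4)$, the event $\|P_{F_0}u_1\|\le r$ has probability $\sim r^2$ (by the Archimedes density \eqref{eq:archimedes} with $\ell=4$, $k=2$), not $\sim r$. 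This quadratic gain is exactly what pushes the threshold below $\gamma_1$. Balancing the single-energy bound $\sim\exp(-2n(2\gamma-\gamma_1-\gamma_2))$ against the propagation scale $\eta\sim\exp(-n(\gamma-\gamma_2))$ gives the condition $2(2\gamma-\gamma_1-\gamma_2)>\gamma-\gamma_2$, i.e.\ $3\gamma>2\gamma_1+\gamma_2$, which is where the $1/3$ actually comes from; it has nothing to do with an optimiser $(\alpha_1,\alpha_2)=(1/3,2/3)$ over block lengths. Your diagnosis that the $W=2$ case lacks the transversality margin available when $W\ge3$ is accurate (that is precisely why $\gamma_{*,1}$ degenerates to $\gamma_1$), but the paper's remedy is the Lagrangian/symplectic pairing above rather than a multi-scale Wegner argument, and your sketch would need to independently establish both the quadratic single-energy gain and the propagation estimate (via Wedin's perturbation bound) to close the argument.
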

\noindent As for  $\gamma_*^{-}$, our methods yield the optimal result:
\begin{thm}\label{thm:3}
 Let $W \geq 2$. If Assumption~\ref{assum} holds, then $\gamma_*^{-}(\lambda) = \gamma_{W}(\lambda)$ for all $\lambda \in \sigma(H)$.
\end{thm}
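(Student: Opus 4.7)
The plan is to reduce Theorem~\ref{thm:3} to a uniform-in-$\lambda$ lower bound on $s_W(\Phi_n(\lambda)|_{L_0})$, where $L_0 = \mathbb R^W \oplus \{0\} \subset \mathbb R^{2W}$ is the Lagrangian subspace associated with the Dirichlet boundary condition at $n=0$. Fix $\lambda_0 \in \sigma(H)$ and $\gamma > \gamma_W(\lambda_0)$; using continuity of the Lyapunov exponents and the strict gap $\gamma_{W-1}(\lambda) > \gamma_W(\lambda)$ from \eqref{eq:simplespec}, choose $r, \epsilon > 0$ with $\gamma - 3\epsilon > \gamma_W(\lambda)$ for every $\lambda \in I := (\lambda_0 - r, \lambda_0 + r)$. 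If some $\lambda' \in I$ were the eigenvalue of a $\Fast^-(\gamma; I)$ eigenfunction $\psi$, then $\Psi_0 = (\psi(0), 0)^\top$ would be a unit vector in $L_0$ for which $s_W(\Phi_n(\lambda')|_{L_0}) \leq \|\Phi_n(\lambda')\Psi_0\| \leq e^{-(\gamma - \epsilon)n}$ holds for all sufficiently large $n$. Borel--Cantelli then reduces the theorem to controlling, for each large $n$, the probability of the event that some $\lambda \in I$ admits such a direction in $L_0$.

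Writing $\Phi_n(\lambda) = \left(\begin{smallmatrix} A_n & B_n \\ C_n & D_n \end{smallmatrix}\right)$, the symplectic relation $A_n^\top C_n = C_n^\top A_n$ yields the identity
\[
\prod_{j=1}^W s_j(\Phi_n(\lambda)|_{L_0})^2 = \det(A_n^\top A_n + C_n^\top C_n) = |\det(A_n(\lambda) + iC_n(\lambda))|^2 =: |Q_n(\lambda)|^2,
\]
so that $s_W \geq |Q_n(\lambda)|/(s_1 \cdots s_{W-1})$. The denominator is controlled from above by the action of $\Phi_n$ on the $(W{-}1)$-fold exterior power $\Lambda^{W-1}\mathbb R^{2W}$: the relevant top Lyapunov exponent is $\gamma_1(\lambda) + \cdots + \gamma_{W-1}(\lambda)$, and standard Furstenberg--Kesten--Le Page upper large deviations---uniformised in $\lambda$ using the polynomial dependence of $\Phi_n(\lambda)$ on $\lambda$---give $s_1 \cdots s_{W-1}(\Phi_n(\lambda)|_{L_0}) \leq \exp\!\bigl(n(\gamma_1 + \cdots + \gamma_{W-1})(\lambda) + \epsilon n/2 \bigr)$ for all $\lambda \in I$ with probability $\geq 1 - e^{-cn}$. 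The task reduces to a matching uniform lower bound on $|Q_n|$ that, combined with this, yields $s_W(\Phi_n(\lambda)|_{L_0}) \geq e^{-n(\gamma_W(\lambda) + \epsilon)}$ throughout $I$.

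Now $Q_n(\lambda)$ extends to an entire function of $\lambda \in \mathbb C$ of degree at most $nW$, and $\log|Q_n|$ is subharmonic. For each fixed $\lambda$, a Thouless-type identity gives $\mathbb E \log|Q_n(\lambda)| = n(\gamma_1 + \cdots + \gamma_W)(\lambda) + o(n)$ with a matching pointwise large-deviation bound. To upgrade this to a uniform-in-$\lambda$ lower bound on $I$, I would apply Jensen's formula on a fixed disk around $\lambda_0$: the boundary circle average is controlled both above and below by the pointwise LDP, and the Jensen subtraction $\sum_{z:\, Q_n(z)=0,\, |z-\lambda| < R} \log(R/|z - \lambda|)$ is handled using a Wegner estimate, which under Assumption~\ref{assum} (absolute continuity of $V(n)$) ensures that the real zeros of $Q_n$ in any sub-interval $J \subset I$ number $O(n|J|)$ in expectation, with analogous control on complex zeros. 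Together these ingredients give the required lower bound on $|Q_n|$ valid on $I$ with overwhelming probability, and Borel--Cantelli closes the argument.

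The main obstacle is the control of $|Q_n|$ very close to its zeros: the $\Fast^-$ scenario forces an anomalously deep dip of $|Q_n|$ at some $\lambda^* \in I$, and such a dip can only occur if either $Q_n$ has a zero of order $\geq 2$ at $\lambda^*$, or two simple zeros come within distance $o(e^{-n\gamma_W})$ of one another. Higher-order vanishing of $Q_n$ at $\lambda^*$ corresponds to the complex matrix $A_n(\lambda^*) + iC_n(\lambda^*)$ having null space of dimension $\geq 2$, a non-generic condition in a one-parameter family of matrices, and near-collisions between simple zeros are similarly atypical. Turning this genericity into the required exponential probability bound \emph{uniformly} in $\lambda \in I$ is the technical heart of the argument, and the step where Assumption~\ref{assum} plays its key role.
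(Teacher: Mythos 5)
Your approach is genuinely different from the paper's, but it has a structural gap that cannot be repaired within the scheme as stated. You reduce the theorem, via Borel--Cantelli, to a \emph{single-scale} statement: with summable probability, $s_W(\Phi_n(\lambda)|_{F_0}) \geq e^{-n(\gamma - \epsilon)}$ should hold simultaneously for all $\lambda$ in the interval, which you then hope to extract from a uniform lower bound on $|Q_n(\lambda)| = |\det(A_n(\lambda) + iC_n(\lambda))|$ via Jensen's formula. But this single-scale statement is strictly stronger than Theorem~\ref{thm:3}: if it held, the same Borel--Cantelli step would yield $\liminf_n \tfrac1n\log\|\Phi_n(\lambda)\Psi(0)\| \geq -\gamma$ for every eigenpair, i.e.\ the conjecture~(\ref{eq:conjlowerbd}) and $\gamma_*^+ = \gamma_W$, which the paper explicitly regards as open and which Theorems~\ref{thm:1}--\ref{thm:2} only approximate. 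In fact the uniform lower bound on $|Q_n|$ is false: at scale-$n$ resonant energies, where some direction of $F_0$ is anomalously aligned with the slow Oseledec directions of $\Phi_n$, the quantity $s_W(\Phi_n(\lambda)|_{F_0})$, and hence $|Q_n(\lambda)|$, dips by an exponential factor. A Wegner-type bound controls the \emph{number} of zeros of $Q_n$ near the interval, but the Jensen subtraction is governed by $\log(R/|z-\lambda|)$, i.e.\ the \emph{distance} from the nearest zero, and your argument offers no control on how close zeros come to the real axis. This is exactly what you flag as ``the main obstacle,'' and it is not a technicality but the reason the approach cannot close.

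What you leave unused --- and what actually separates $\gamma_*^-$ from $\gamma_*^+$ --- is the multi-scale information in $\Fast^-$: if $\lambda \in \Fast^-(\gamma,I)$, then $\|\Phi_m(\lambda)\Psi(0)\| \leq e^{-\gamma m}$ holds for \emph{every} large $m$, in particular simultaneously at $m=n$ and $m=n^2$. The paper proves Proposition~\ref{p:gorkl} (a symplectic analogue of a result of Gorodetski and Kleptsyn) precisely to exploit this: with probability $1 - Ce^{-cn}$, \emph{every} $\lambda'$ in the interval is ``regular'' at one of the two scales, meaning all of $\tfrac1m\log s_j(\Phi_m(\lambda'))$ lie within $\epsilon$ of the Lyapunov exponents. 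On the regular scale, the Wedin propagation estimate and the Archimedes single-energy bound from the proof of Theorem~\ref{thm:2} deliver the optimal rate $\gamma_W$, because the lower bounds on the singular gaps that regularity supplies are precisely what stabilises the singular vectors under small perturbations of $\lambda$. The scale-$n$ resonances (the dips of $|Q_n|$) sit at different $\lambda$'s than the scale-$n^2$ resonances, which is why $\Fast^-$ can be ruled out while $\Fast^+$ still escapes --- and why a uniform-in-$\lambda$ lower bound at a single scale, of the kind you pursue, is out of reach.
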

\noindent The following corollary summarises our main conclusions:
\begin{cor} Let $W \geq 2$. If Assumption~\ref{assum} holds, then
\begin{equation}
\gamma_W(\lambda) = \gamma_*^{-}(\lambda) \leq \gamma_*^{+}(\lambda) < \gamma_{1}(\lambda)
\end{equation}
for all $\lambda \in \sigma(H)$.
\end{cor}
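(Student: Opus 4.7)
The plan is to obtain the corollary by directly combining Theorems~\ref{thm:1}, \ref{thm:2}, and~\ref{thm:3}; all of the analytic content is packaged into those statements, so this is essentially a bookkeeping exercise with one small algebraic verification. I handle the three inequalities in order.

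First I would dispose of the middle inequality $\gamma_*^{-}(\lambda) \leq \gamma_*^{+}(\lambda)$ directly from the definitions (\ref{eq:deffast}) and (\ref{eq:degamma}): satisfying the $\limsup$ condition in the definition of $\Fast^-$ implies the $\liminf$ condition in the definition of $\Fast^+$, so $\Fast^-(\gamma;I) \subseteq \Fast^+(\gamma;I)$, which in turn means the infimum defining $\gamma_*^-$ is taken over a larger set than the one defining $\gamma_*^+$. The left equality $\gamma_W(\lambda) = \gamma_*^{-}(\lambda)$ is then immediate from sandwiching the lower bound $\gamma_*^-(\lambda) \geq \gamma_W(\lambda)$ already recorded in (\ref{eq:sandwich}) with the matching upper bound supplied by Theorem~\ref{thm:3}.

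The remaining strict inequality $\gamma_*^{+}(\lambda) < \gamma_1(\lambda)$ splits according to $W$. For $W = 2$, Theorem~\ref{thm:2} gives $\gamma_*^{+}(\lambda) \leq (2\gamma_1(\lambda) + \gamma_2(\lambda))/3$, and the latter is strictly less than $\gamma_1(\lambda)$ once one invokes the strict separation $\gamma_2(\lambda) < \gamma_1(\lambda)$ from (\ref{eq:simplespec}), which is available because Assumption~\ref{assum} implies the Zariski-density condition (\ref{eq:zcond}). For $W \geq 3$, Theorem~\ref{thm:1} reduces the task to showing that the unique solution $\gamma_{*,1}(\lambda)$ of (\ref{eq:thm1}) lies strictly below $\gamma_1(\lambda)$. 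The left-hand side of (\ref{eq:thm1}) is continuous, piecewise linear, and strictly increasing in $\gamma$, with value $0$ at $\gamma = 0$; I would evaluate it at $\gamma = \gamma_1(\lambda)$ to obtain $\sum_{j=2}^{W-1}(\gamma_1(\lambda) - \gamma_j(\lambda)) + \gamma_1(\lambda)$, which strictly exceeds $\gamma_1(\lambda)$ because the sum is non-empty (using $W \geq 3$) and its $j = 2$ term $\gamma_1(\lambda) - \gamma_2(\lambda)$ is strictly positive by (\ref{eq:simplespec}). Monotonicity then forces $\gamma_{*,1}(\lambda) < \gamma_1(\lambda)$. I do not foresee a substantive obstacle: the analytic heavy lifting is carried out in Theorems~\ref{thm:1}--\ref{thm:3}, and the only auxiliary input is the simplicity of the top of the Lyapunov spectrum, which Assumption~\ref{assum} already guarantees through (\ref{eq:zcond})--(\ref{eq:simplespec}).
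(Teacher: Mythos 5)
Your proposal is correct and follows the same route the paper takes: the corollary is simply an assembly of Theorems~\ref{thm:1}--\ref{thm:3} together with the chain~(\ref{eq:sandwich}), and your check that $\gamma_{*,1}(\lambda) < \gamma_1(\lambda)$ for $W \geq 3$ (by evaluating the increasing left-hand side of~(\ref{eq:thm1}) at $\gamma = \gamma_1(\lambda)$ and using the strict gap $\gamma_1 > \gamma_2$ from~(\ref{eq:simplespec})) is exactly the observation the paper records, without proof, immediately after Theorem~\ref{thm:1}. No gaps.
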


In the proofs, we repeatedly use the following argument, inspired by the work of Kakutani \cite{Kakutani} and its ramifications by Spencer and Aizenman \cite{Aiz}, to  estimate the probability of exceptional events. Suppose  we want to bound $\mathbb P\left\{ A \neq \varnothing\right\}$, where  $A$ is a random subset of, say, the interval $[0, 1]$. Suppose we find $\eta \in (0, 1]$ and a random superset $A^{+} \supset A$ with the following properties: (a)~for each $\lambda \in [0, 1]$, $\mathbb P \left\{ \lambda \in A^{+}\right\} \leq p$ (``single-energy bound''); (b) if $\lambda \in A$ and $|\lambda' - \lambda| < \eta$, then $\lambda' \in A^{+}$ (``propagation estimate''). Then  the Chebyshev inequality and the Fubini theorem yield:
\[ \mathbb P \left\{ A \neq \varnothing \right\}
\leq \mathbb P \left\{ \operatorname{mes} (A^+ \cap [0,1]) \geq \eta \right\}
\leq \frac1\eta \mathbb E \operatorname{mes} (A^+ \cap [0,1])
\leq \frac{p}{\eta}~.\]
The paper is organised as follows. Some preliminary estimates are collected in Section~\ref{s:prelim}.  In Sections~\ref{s:1} and \ref{s:2} we prove Theorems~\ref{thm:2} and \ref{thm:3}, respectively. In Section~\ref{s:concl} we discuss the prospects of improving the bounds in Theorems~\ref{thm:1} and \ref{thm:2}, and point out the connection to the problem, going back to \cite{G75,G80a} and recently studied by Gorodetski and Kleptsyn \cite{GorKl}, of uniform convergence to the Lyapunov exponents, i.e.\ whether  the quantifier $\forall \lambda$ in (\ref{eq:fk}) can be inserted inside the curly brackets. We also prove Proposition~\ref{p:gorkl}, which is an $\Sp(2W, \mathbb R)$-counterpart of one of the results of \cite{GorKl}. The proof of  Theorem~\ref{thm:3} in Section~\ref{s:3} makes use of this proposition.

\medskip
We conclude this introduction with two remarks. First, we have chosen to present the arguments for the one-sided strip $\mathbb Z_+ \times \{1, \cdots, W\}$; similar arguments can be applied to the two-sided strip $\mathbb Z \times \{1, \cdots, W\}$. Second, it is possible that Assumption~\ref{assum} can be somewhat relaxed, and that  a refinement of the current methods could be applicable when the invariant measure (describing the limiting distribution of the unitary matrices in the singular value decomposition of the transfer matrices $\Phi_n$) is absolutely continuous with bounded density with respect to the Haar measure on the compact symplectic group, or at least enjoys the Frostman property (upper bound on the measure of every ball by a power of the radius) with a sufficiently large exponent. On the other hand, it is known (see \cite{H} for the case $W=1$) that for singular distributions of $V(n)$ the invariant measure may be supported on lower-dimensional subsets of the symplectic group. Extending our results to such cases would require additional  ideas.

%%%%%%%%%%%%%

\section{Preliminaries}\label{s:prelim}

\paragraph{Convergence to the Lyapunov exponent} Assume that (\ref{eq:cond}) holds at some $\lambda$. Then (\ref{eq:cond}) also holds in a neighbourhood of $\lambda$, and then (see e.g.\ \cite[Corollary 2.5]{KLS})  the Lyapunov exponents $\gamma_j(\lambda)$ are continuous at $\lambda$. For each $\epsilon > 0$, let $r_\epsilon(\lambda)\in (0, 1/2]$ be such that
\begin{equation}\label{eq:def-r} \forall \lambda' \in (\lambda - r_\epsilon(\lambda), \lambda + r_\epsilon(\lambda)) \,\, \forall 1 \leq j \leq W \,\,\,\, |\gamma_j(\lambda') - \gamma_j(\lambda)| < \epsilon~. \end{equation}
The following large deviation estimate goes back to the work of Le Page \cite{LP}.
\begin{lemma}[see  \cite{DK}, {\cite[Section V.6]{BougL}}]\label{l:largedev}
Assume (\ref{eq:model}). Let $I \Subset \mathbb R$  be a finite interval such that (\ref{eq:cond}) holds for all $\lambda \in I$. Then there exist $C>0$ and $c>0$ such that for each $\lambda \in I$,  $1 \leq j \leq W$, $\epsilon \in(0, 1]$, and $n \geq 1$
\begin{equation}\label{eq:largedev} \mathbb P\left\{\left| \frac{1}{n} \log s_j(\Phi_n(\lambda)) - \gamma_j(\lambda) \right| \geq \epsilon\right\} \leq C \exp(-c \epsilon^2 n)~.\end{equation}
\end{lemma}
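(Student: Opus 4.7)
The strategy is to reduce the estimate for each singular value $s_j$ to a large deviation bound for the norms of the exterior powers $\wedge^j \Phi_n(\lambda)$, and then to invoke Le Page's exponential concentration theorem on each of these exterior powers, with constants that are uniform over $\lambda \in I$.

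I would proceed in three steps. First, using the identity $s_1(M) \cdots s_j(M) = \|\wedge^j M\|$, it suffices to prove that under (\ref{eq:model}) and (\ref{eq:cond}), there exist $C, c > 0$ (uniform over $\lambda \in I$) such that for every $1 \le j \le W$, every $\epsilon \in (0,1]$ and every $n \ge 1$,
\begin{equation*}
\mathbb P \left\{ \left| \frac{1}{n} \log \|\wedge^j \Phi_n(\lambda)\| - \sum_{i=1}^{j} \gamma_i(\lambda) \right| \ge \epsilon \right\} \le C \exp(-c\epsilon^2 n);
\end{equation*}
subtracting the estimates for $j$ and $j-1$ (and adjusting constants) then yields the claim for $s_j(\Phi_n(\lambda))$. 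Second, to prove the display above, one uses that (\ref{eq:cond}) is tailored precisely so that the induced action of the semigroup generated by $\mathcal S_\lambda$ on $\wedge^j \mathbb R^{2W}$ is strongly irreducible and contractive; together with the moment condition (\ref{eq:model}) (which transfers to the wedge powers via $\|\wedge^j M\| \le \|M\|^j$), this puts one exactly in the hypotheses of Le Page's theorem. The theorem yields a spectral gap for the Markov operator $Pf(\bar x) = \mathbb E f(T \cdot \bar x)$ acting on suitable Hölder functions on the projective space $\mathbb P(\wedge^j \mathbb R^{2W})$, and the standard Nagaev--Guivarc'h perturbation of the associated transfer operators $P_z f(\bar x) = \mathbb E [\|\wedge^j T \bar x\|^z f(T \cdot \bar x)]$ delivers a Cramér-type moment-generating-function estimate, from which the subgaussian deviation inequality follows by a Markov-inequality / optimisation argument.

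Third, I would handle the uniformity in $\lambda \in I$. The distribution of $T_n(\lambda)$ depends smoothly (indeed, polynomially) on $\lambda$, so on a compact subinterval the invariant measure of the projective Markov chain, the contraction rate on the projective space, and the resulting spectral gap of $P_z$ on the Hölder space all depend continuously on $\lambda$ (and remain bounded away from degeneracy because (\ref{eq:cond}) holds throughout $I$ by assumption). By a compactness argument, the constants $C, c$ appearing in Le Page's inequality can thus be chosen uniformly over $\lambda \in I$. This uniformity is the main technical point: once one accepts Le Page's quantitative LDP for fixed $\lambda$, what must be verified is that its proof (the contraction estimate on $\mathbb P(\wedge^j \mathbb R^{2W})$ and the Nagaev--Guivarc'h perturbation of $P_z$) is stable under the analytic variation of the underlying distribution in $\lambda$.

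The main obstacle, as noted, is this uniform control of the spectral gap; it is exactly what is established in the cited works of Duarte--Klein and Bougerol--Lacroix, and I would follow their treatment rather than reproduce it. Everything else (the passage from $\|\wedge^j \Phi_n\|$ to $s_j(\Phi_n)$, and from moment bounds to probability bounds) is routine given those inputs.
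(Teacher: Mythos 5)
The paper does not prove this lemma; it imports it verbatim from Duarte--Klein \cite{DK} and Bougerol--Lacroix \cite[Section V.6]{BougL}, so there is no in-paper argument against which to compare. That said, your sketch is a faithful account of the standard proof in those references: the reduction via $s_1 \cdots s_j = \|\wedge^j M\|$ and differencing, the use of hypothesis (\ref{eq:cond}) (which is phrased exactly so that every wedge action is strongly irreducible and contractive) to put each exterior power in the setting of Le Page's theorem, the Nagaev--Guivarc'h spectral-gap perturbation giving the Cram\'er-type bound, and the continuity/compactness argument for uniformity of the constants over $\lambda \in I$. The one point worth stating a bit more explicitly is that the quadratic exponent $c\epsilon^2 n$ for $\epsilon \in (0,1]$ comes from Taylor-expanding the pressure function near $z=0$ in the Nagaev--Guivarc'h optimisation; your ``Markov-inequality / optimisation argument'' covers this, but since the lemma demands uniformity in both $\epsilon$ and $\lambda$, the expansion coefficients (in particular the nondegeneracy of the variance) must also be controlled uniformly on $I$, which the same compactness argument provides.
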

%\noindent In Section~\ref{s:concl} we shall also need the following well-known strengthening of Lemma~\ref{l:largedev}:
%\begin{rmk}\label{rem:largedev} Let $F_1$ and $F_2$ be two isotropic $j$-dimensional subspaces of $\mathbb R^{2W}$ (i.e.\ $J F_j \subset F_j^\perp$, where $J = \left( \begin{array}{cc} 0 & - \mathbbm{1} \\ \mathbbm{1} & 0 \end{array} \right)$ is the symplectic rotation), and let $P_{F_j}$ be the orthogonal projection onto $F_j$. Then the estimate (\ref{eq:largedev}) still holds if  $s_j(\Phi_n(\lambda))$ is replaced with $s_j(P_{F_1} \Phi_n(\lambda) P_{F_2})$.
%\end{rmk}
%This strengthened version follows from the usual Lemma~\ref{l:largedev}, the exponential convergence of the matrix $V_n$ in the singular value decomposition $\Phi_n = U_n \Sigma_n V_n^*$ of $\Phi_n$ to a limiting matrix (see \cite{GM}), and the Frostman property of the distribution of the limiting matrix (see \cite[Section VI.5]{BougL}). In the special case that we need -- of matrices satisfying Assumption~\ref{assum} -- one can also appeal to item (b) of Lemma~\ref{l:smooth} below.

\smallskip
The arguments leading to the following corollary of Lemma~\ref{l:largedev} are also well known (for $W = 1$, see e.g.\ Jitomirskaya and Zhu \cite[Section 5]{JZ}; we also mention a result of Craig--Simon \cite[Theorem 2.3]{CS}, which is not quantitative, but on the other hand holds in more general setting).
\begin{lemma}\label{l:upperbd} Assume (\ref{eq:model}). Suppose $\lambda \in \mathbb R$ is such that (\ref{eq:cond}) holds. Then there exist $C>0$ and $c>0$ such that for each $1 \leq j \leq W$, $\epsilon \in(0, 1]$, and $n \geq 1$
\[ \mathbb P\left\{ \exists \lambda' \in  (\lambda - r_\epsilon(\lambda), \lambda + r_\epsilon(\lambda)) \,\, : \,\, \frac{1}{n} \sum_{i=1}^j \log s_i(\Phi_n(\lambda')) \geq \sum_{i=1}^j \gamma_i(\lambda) + 2j \epsilon \right\} \leq C n \exp(-c \epsilon^2 n)~.\]
\end{lemma}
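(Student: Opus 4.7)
The plan is to upgrade the single-energy large deviation estimate of Lemma~\ref{l:largedev} to a uniform-in-$\lambda'$ estimate over the interval $I_\epsilon := (\lambda - r_\epsilon(\lambda), \lambda + r_\epsilon(\lambda))$, by a grid argument combined with a Markov-type polynomial inequality. The key structural observation is that $\Phi_n(\lambda)$ is a matrix-valued polynomial in $\lambda$ of degree $n$ (it is linear in each $T_k(\lambda)$), and hence $\Lambda^j \Phi_n(\lambda)$ is of degree $jn$, with operator norm equal to $\prod_{i=1}^j s_i(\Phi_n(\lambda))$.

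Concretely, I would introduce a uniform grid $\{\lambda_k\}_{k=1}^N \subset I_\epsilon$ of cardinality $N = O(n^2)$ with spacing $\delta \asymp r_\epsilon(\lambda)/n^2$. Applying the classical Markov inequality $\|p'\|_{L^\infty(I_\epsilon)} \leq 2 d^2 |I_\epsilon|^{-1} \|p\|_{L^\infty(I_\epsilon)}$ entrywise to the matrix polynomial $\Lambda^j \Phi_n$ of degree $d = jn$, and then using the equivalence of operator and entrywise norms in the fixed dimension $\binom{2W}{j}$, yields a deterministic bound
\[ \|\Lambda^j \Phi_n(\lambda') - \Lambda^j \Phi_n(\lambda_k)\| \leq C_W (jn)^2\, \delta\, r_\epsilon^{-1}\, \sup_{\mu \in I_\epsilon}\|\Lambda^j \Phi_n(\mu)\| \qquad (|\lambda' - \lambda_k| \leq \delta)~. \]
Choosing $\delta$ so that the coefficient $C_W(jn)^2 \delta r_\epsilon^{-1}$ is at most $1/2$ and comparing the argmax of $\|\Lambda^j \Phi_n(\cdot)\|$ over $I_\epsilon$ to its nearest grid point, I obtain deterministically
\[ \sup_{\lambda' \in I_\epsilon} \|\Lambda^j \Phi_n(\lambda')\| \leq 2 \max_{k} \|\Lambda^j \Phi_n(\lambda_k)\|~, \]
so the logarithmic supremum exceeds the grid maximum by at most $\log 2$.

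To conclude, I would apply Lemma~\ref{l:largedev} at each grid point, separately for each of $s_1,\dots,s_j$, with parameter $\epsilon/2$; a union bound over the $jN = O(n^2)$ events yields a set of probability at least $1 - Cn^2\exp(-c\epsilon^2 n)$ on which $\tfrac{1}{n}\sum_{i=1}^j \log s_i(\Phi_n(\lambda_k)) \leq \sum_{i=1}^j \gamma_i(\lambda_k) + j\epsilon/2$ at every $\lambda_k$. The continuity bound $\sum_i \gamma_i(\lambda_k) \leq \sum_i \gamma_i(\lambda) + j\epsilon$ provided by the definition of $r_\epsilon$, together with the interpolation estimate above (which contributes only a further $\log 2$, easily absorbed into $jn\epsilon/2$), then gives the sought uniform bound. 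The prefactor $n^2$ can be traded for $n$ by slightly decreasing $c$, since the statement is vacuous in the regime $\epsilon^2 n \lesssim \log n$. The main obstacle is precisely this passage from the grid to the continuum: a naive Lipschitz bound $\|\Phi_n'(\lambda)\| \leq n \max_k \|T_k\|^{n-1}$ is exponentially large in $n$ and would force an exponentially sized grid, destroying the union bound; the Markov inequality exploits the polynomial structure of $\Phi_n$ to keep the grid polynomial in $n$.
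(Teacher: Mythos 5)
Your proof is correct, and it rests on the same core observation as the paper's: the entries of $\Phi_n(\lambda)^{\wedge j}$ are polynomials of degree $jn$ in $\lambda$, so a polynomial-in-$n$ grid suffices for a union bound, and one needs a polynomial inequality to pass from the grid to the continuum. The difference is in the specific inequality. The paper invokes Bernstein's theorem bounding $\sup_{|x|\le 1}|q(x)|$ by a constant times $\log(\deg q)$ times the maximum over $\deg q + 1$ Chebyshev nodes; this keeps the grid of size $O(n)$ and the loss factor $O(\log n)$, both of which are absorbed into $e^{\epsilon n /3}$ once one reduces (as the paper does, and you also implicitly do) to the regime $\epsilon^2 n \gtrsim \log n$. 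You instead use Markov's derivative inequality, $\|p'\|_{L^\infty(I)} \leq 2d^2|I|^{-1}\|p\|_{L^\infty(I)}$, to obtain a Lipschitz bound of $\Lambda^j\Phi_n$ relative to its own supremum, and then a self-improving ``nearest grid point'' argument; this costs you a grid of size $O(n^2)$ (since the spacing must be $\lesssim r_\epsilon/(jn)^2$ to make the Lipschitz coefficient smaller than one), which you correctly observe can be traded for a marginally smaller constant $c$. Both routes work; the paper's version is a bit sharper in the grid size, but yours is arguably a touch more elementary since it avoids Chebyshev interpolation.

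One small remark on presentation: when you choose $\delta$ so that $C_W(jn)^2\delta r_\epsilon^{-1}\le 1/2$ and conclude $\sup_{I_\epsilon}\|\Lambda^j\Phi_n\|\le 2\max_k\|\Lambda^j\Phi_n(\lambda_k)\|$, the clean way to say it is: for $\lambda'$ achieving the supremum, take the nearest $\lambda_k$, so $\sup \le \max_k + \tfrac12\sup$, whence $\sup \le 2\max_k$. You gesture at this but it is worth writing out, since the Lipschitz bound you derived is relative to $\sup\|\Lambda^j\Phi_n\|$, not to a fixed quantity, and the argument is of bootstrapping type.
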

\begin{proof}
If  $n \leq 100 W^2$ or $\epsilon^2 \leq 100 \log n / n$, we can ensure the desired inequality by adjusting the constants, therefore we assume that  $n > 100 W^2$ and $\epsilon^2 \geq 100 \log n / n$.
Consider the $j$-th exterior power $\Phi_n(\lambda')^{\wedge j}$ of $\Phi_n(\lambda')$, so that
\[\log \|\Phi_n(\lambda')^{\wedge j} \| = \sum_{i=1}^j \log s_j(\Phi_n(\lambda'))~.\]
Each matrix element $p(\lambda')$ of $\Phi_n(\lambda')^{\wedge j}$ (where $p$ runs in a finite set $P$ enumerating
the matrix elements) is a polynomial of degree $\leq j n \leq W n$ in $\lambda$. Now we use the following  result of Bernstein \cite{Bern}, although we require much less than its full strength (in place of the logarithmic dependence on the degree with a precise constant, we could do with any prefactor growing slower than exponentially):  for any polynomial $q$ of degree $n$
\[ \max_{|\lambda| \leq 1} |q(\lambda)|  \leq C_n  \max_{\alpha\in \{0,1,\cdots,n\}} |q(\cos (\pi \frac{\alpha+\frac12}{n+1}))|~, \quad \text{where} \quad
 C_n = (1+ o(1)) \, \frac2\pi \log n~.  \]
Returning to our setting, let
\[ \lambda_\alpha = \lambda +  r_\epsilon(\lambda) \cos(\pi \frac{\alpha +\frac12}{Wn+1})~, \quad 0 \leq \alpha \leq Wn~;\]
then we have for any $p \in P$:
\begin{equation}\label{eq:interp}\max_{\lambda' \in  (\lambda - r_\epsilon(\lambda), \lambda + r_\epsilon(\lambda))} |p(\lambda)| \leq C \log (Wn) \max_{\alpha\in \{0,1,\cdots,Wn\}} |p(\lambda_\alpha)| \leq e^{\frac{\epsilon n}{3}} \max_{0 \leq \alpha \leq Wn} |p(\lambda_\alpha)|~. \end{equation}
By Lemma~\ref{l:largedev} and the choice of $r_\epsilon$,
\[ \mathbb P \left\{ |p(\lambda_\alpha)| \geq \exp\left\{n \left[ \sum_{i=1}^j \gamma_i(\lambda) + \frac{4j}{3} \epsilon  \right] \right\}   \right\} \leq C' \exp(-c' \epsilon^2 n)~.\]
Thus by (\ref{eq:interp})
 \[ \mathbb P \left\{ \max_{p \in P}\max_{\lambda' \in  (\lambda - r_\epsilon(\lambda), \lambda + r_\epsilon(\lambda))} |p(\lambda')|   \geq \exp\left\{n \left[ \sum_{i=1}^j \gamma_i(\lambda) + \frac{5j}{3} \epsilon  \right] \right\}   \right\} \leq C'' n  \exp(-c' \epsilon^2 n)~.\]
Finally, $\|\Phi_n(\lambda')\| \leq C  \max_{p \in P} |p(\lambda')| \leq e^{\epsilon n /3} \max_{p} |p(\lambda')|$, and this completes the proof.
\end{proof}

\paragraph{The probability density of transfer matrices} The following lemma builds on the arguments going back to the work of Ricci and Stein \cite{RS}. In the context of random Schr\"odinger operators, similar reasoning appears in the work Shubin, Vakilian and Wolff \cite{SVW}. Recently, a general argument in the setting of motivic  morphisms has been developed by Glazer and Hendel \cite{GH}; further arguments are discussed in \cite{GHS}. For completeness, we sketch a proof (restricted to the generality of the current discussion) below.

\begin{lemma}\label{l:smooth}
Assume Assumption~\ref{assum}. There exists $n_0$ such that the following holds.
\begin{enumerate}
\item[(a)] For any $\lambda \in \mathbb R$ the distribution of $\Phi_{n_0}(\lambda)$ is absolutely continuous with bounded density with respect to the Haar measure on $\Sp(2W, \mathbb R)$.
\item[(b)] Let $\Phi_n(\lambda) = U_n(\lambda) \Sigma_n(\lambda) V_n(\lambda)^*$ be the singular value decomposition of $\Phi_n(\lambda)$. Then there exists $C$ such that for any  $n \geq n_0$ the distributions of $V_n(\lambda)$, $U_n(\lambda)$ and $V_n^*(\lambda) U(\lambda)$ are absolutely continuous with density $\leq C$ with respect to the Haar measure on the compact symplectic group $\Sp(2W, \mathbb R) \cap \operatorname{SO}(2W, \mathbb R)$.
\end{enumerate}
Moreover, the bounds in (a)--(b) are locally uniform in $\lambda$.
\end{lemma}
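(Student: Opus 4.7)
The plan is to realise the law of $\Phi_{n_0}(\lambda)$ as the push-forward of a bounded, compactly supported density under a real-analytic map into $\Sp(2W,\mathbb R)$, and then to turn Assumption~\ref{assum}(b) into the statement that this map is a submersion on a set of full measure with a Jacobian that can be controlled uniformly.

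I would set up the real-analytic map
$$ \Psi_n : \mathcal M^n \to \Sp(2W,\mathbb R), \qquad (V_1,\ldots,V_n) \mapsto T(V_n,\lambda)\cdots T(V_1,\lambda), $$
where $T(V,\lambda)=\bigl(\begin{smallmatrix} \lambda-V & -\mathbbm{1} \\ \mathbbm{1} & 0 \end{smallmatrix}\bigr)$, so that $(\Psi_n)_*(\rho^{\otimes n})$ is the law of $\Phi_n(\lambda)$, with $\rho$ the bounded density on $\mathcal M$ furnished by Assumption~\ref{assum}(a). Differentiating along a single coordinate $V_i$ at a point $V^*=(V_1^*,\ldots,V_n^*)$ and right-translating to the identity produces a vector in $\mathfrak{sp}(2W,\mathbb R)$ of the form $\operatorname{Ad}\bigl(T(V_n^*,\lambda)\cdots T(V_{i+1}^*,\lambda)\bigr)X_i$, where $X_i$ lies in the right-translate to the identity of the tangent space to $\{T(V,\lambda):V\in\mathcal M\}$ at $T(V_i^*,\lambda)$. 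Assumption~\ref{assum}(b) is precisely the statement that the Lie subalgebra of $\mathfrak{sp}(2W,\mathbb R)$ generated by all such vectors (as $V^*_i$ varies over $\mathcal M$) is all of $\mathfrak{sp}(2W,\mathbb R)$, and a standard Lie-theoretic argument (e.g.\ Sussmann's orbit theorem) then produces an integer $n_1\leq\dim\Sp(2W,\mathbb R)$ and a point $V^*\in\mathcal M^{n_1}$ at which $d\Psi_{n_1}(V^*)$ is surjective. By real-analyticity, surjectivity of $d\Psi_n$ persists on an open dense subset of $\mathcal M^n$ for every $n\geq n_1$.

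I would then invoke the coarea formula: on the open set of full-rank points, the push-forward $\mu_n:=(\Psi_n)_*(\rho^{\otimes n})$ is absolutely continuous with respect to Haar on $\Sp(2W,\mathbb R)$, with density at $g$ given by a fibre integral of $\rho^{\otimes n}/|J_{\Psi_n}|$ against the induced measure on $\Psi_n^{-1}(g)$. Assumption~\ref{assum}(a) makes the source density bounded and compactly supported, and in particular makes the fibres compact. The remaining task is to show that the fibre integrals of $|J_{\Psi_n}|^{-1}$ are bounded uniformly in $g$; this is achieved by a \L{}ojasiewicz-type integrability estimate (which applies because $J_{\Psi_n}$ is a non-trivial real-analytic function on the compact support) and becomes effective once $n\geq n_0$ is taken large enough for the fibres to have sufficient codimension. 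Continuous dependence on $\lambda$ of all the ingredients gives the local uniformity claimed at the end of the lemma, completing~(a).

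For part (b), observe that on the open, co-null subset of $\Sp(2W,\mathbb R)$ where the singular values are simple, the singular-value decomposition $\Phi=U\Sigma V^*$ is real-analytic, and each of $\Phi\mapsto U$, $\Phi\mapsto V$, $\Phi\mapsto V^*U$ is a submersion onto its image in the compact symplectic group $\Sp(2W,\mathbb R)\cap\operatorname{SO}(2W)$. Pushing the bounded density from~(a) through these submersions---integrating out $\Sigma$ and the complementary unitary factor against their bounded conditional densities over the compact source---gives bounded densities on the compact symplectic group. The main obstacle, in my view, is the step in~(a) that passes from absolute continuity (which would follow from almost-everywhere full rank alone) to a uniform $L^\infty$ bound on the density: mere full-rank almost everywhere gives a locally integrable density, and controlling it in $L^\infty$ requires precise control of the rate at which $|J_{\Psi_n}|$ degenerates near its analytic zero set, which is exactly where the Ricci--Stein and Shubin--Vakilian--Wolff-type arguments referenced in the paper do the heavy lifting.
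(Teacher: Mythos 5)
You correctly identify the overall strategy---realize the law of $\Phi_{n}(\lambda)$ as the push-forward of a bounded compactly supported density under the real-analytic product map into $\Sp(2W,\mathbb R)$, show this map is a submersion somewhere, and push the density through---and you correctly single out the passage from absolute continuity to a \emph{bounded} density as the difficult step. Two remarks on where your proposal agrees with and diverges from the paper.

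For the submersion step you invoke a Lie-theoretic argument (Sussmann's orbit theorem) to produce an $n_1\leq\dim\Sp(2W,\mathbb R)$ and a point where $d\Psi_{n_1}$ is surjective. The paper instead quotes \cite[Proposition~1.1]{RS}, which under Assumption~\ref{assum}(b) gives that $F_{n_1}(\mathcal M^{n_1})$ contains an open set already for $n_1=2^{\dim\Sp(2W,\mathbb R)-\dim\mathcal M}$; either way one concludes that $\det[(DF_{n_1})^*(DF_{n_1})]\not\equiv 0$, and by analyticity its maximum is bounded away from zero locally uniformly in $\lambda$, which gives the local uniformity in $\lambda$. Both routes are legitimate.

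The genuine gap is in the $L^\infty$ step. You propose to obtain the $L^\infty$ bound on the density of $\Phi_{n}(\lambda)$ directly, for a single large $n$, by a ``\L{}ojasiewicz-type integrability estimate.'' A \L{}ojasiewicz inequality controls the rate at which the Jacobian of $\Psi_n$ vanishes near its analytic zero set and so yields that the push-forward density lies in $L^p$ for \emph{some} $p>1$; it does not bound the density pointwise, since the fibre integrals of $|J_{\Psi_n}|^{-1}$ do blow up near critical values. It is also not clear why merely increasing $n$ or the fibre codimension would cure this in your framework. The paper resolves this by a bootstrap that uses the group structure explicitly: one first establishes (via \cite[Proposition~2.1]{RS}, or a direct \L{}ojasiewicz/analyticity argument) that the density of $\Phi_{n_1}(\lambda)$ lies in $L^p(\Sp(2W,\mathbb R))$ for some $p>1$ locally uniformly in $\lambda$; then, since for $n_0=n_1\big(\lfloor(1-1/p)^{-1}\rfloor+1\big)$ the density of $\Phi_{n_0}(\lambda)$ is a convolution on $\Sp(2W,\mathbb R)$ of $\lfloor(1-1/p)^{-1}\rfloor+1$ independent blocks each with $L^p$ density, Young's convolution inequality
\[
\|f_1*\cdots*f_m\|_\infty \leq \prod_{\alpha=1}^m \|f_\alpha\|_{1+\frac1m}
\]
upgrades the $L^p$ bound to $L^\infty$. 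Without this (or an equivalent) convolution bootstrap, your argument stops at an $L^p$ density and does not deliver item (a). Your sketch for item (b) (the SVD maps are submersions on the co-null set where the singular values are simple, so they push bounded densities to bounded densities) is fine once (a) is in place.
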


\begin{rmk} For concreteness, we may assume that the singular value decomposition is constructed so that $\Sigma_n$ is diagonal with strictly decreasing positive entries on the diagonal,  and the first non-zero entry of eich column of $U_n$ and $V_n$ is positive.
\end{rmk}

\begin{proof} Consider the product map
\begin{equation}\label{eq:map} F_{n} = F_{n, \lambda}: \mathcal M^n \to \Sp(2W, \mathbb R)~, \quad (V(1), \cdots, V(n)) \mapsto \Phi_{n}(\lambda)~. \end{equation}
According to \cite[Proposition~1.1]{RS}, for
\[ n_1 = 2^{\dim \Sp(2W, \mathbb R) - \dim \mathcal M} = 2^{W(2W+1) - \dim \mathcal M} \]
the image $F_{n_1}(\mathcal M)$ contains an open set in $\Sp(2W, \mathbb R)$ (in the Schr\"odinger case, the same conclusion holds for $n_1 = \dim \Sp(2W, \mathbb R) \div \dim \mathcal M = 2W+1$; see \cite[Proposition 1.4.35]{Lacr3}). Hence $\det [(D F_{n_1})^* (D F_{n_1})]$ is not identically zero; by continuity, the maximum of its absolute value is bounded away from zero locally uniformly in $\lambda$.

The map (\ref{eq:map}) is real analytic, therefore  the probability density of $\Phi_{n_1}(\lambda)$ lies in $L_p$ for some $p>1$ (this can be proved directly as in \cite{GHS} or deduced from \cite[Proposition~2.1]{RS} using an appropriate embedding theorem), and, again, both $p$ and the bound are locally uniform in $\lambda$.
Applying the inequality
\[ \| f_1 * f_2 * \cdots * f_n \|_\infty \leq \prod_{\alpha=1}^n \|f_\alpha\|_{1+\frac1n}~, \quad f_\alpha \in L_{1 + \frac1n}(\Sp(2W, \mathbb R)) \]
(which is a simple special case of the Young convolution inequality on $\Sp(2W, \mathbb R)$), we obtain that for $n_0=   n_1 (\lfloor (1 - 1/p)^{-1} \rfloor+ 1)$  the density of $\Phi_{n_0}(\lambda)$ is bounded. This proves the first item, from which the second one follows.
\end{proof}

\paragraph{A geometric lemma} Denote by $S(F)$ the unit sphere of an Euclidean vector space $F$. For future reference, we record the following fact (attributed to Archimedes): if $u$ is a random vector uniformly distributed on $S(\mathbb R^\ell)$, then the probability density of the random vector $P_F u$, where $P_F: \mathbb R^\ell \to F$ be the orthogonal projection onto a fixed $k$-dimensional subspace $F \subset \mathbb R^\ell$, $1 \leq k \leq \ell - 1$, is given by
\begin{equation}\label{eq:archimedes} f_{\ell,k}(v) = C_{\ell,k} (1 - \|v\|)_+^{\frac{\ell-k}{2} - 1}~.\end{equation}

\begin{lemma}\label{l:geom} Let $U$ be a random matrix taking values in $\operatorname{SO}(\ell, \mathbb R)$ such that for each $u \in S(\mathbb R^\ell)$ the vector $Uu$ is uniformly distributed on $S(\mathbb R^\ell)$. Let $D = \operatorname{diag}(e^{a_1}, \cdots, e^{a_\ell})$, where $a_1 \geq a_2 \geq \cdots \geq a_\ell$, and let $F \subset \mathbb R^\ell$ be a $k$-dimensional subspace. Then  for any $a_1 \geq a \geq a_\ell$
\[ \mathbb P \left\{ \exists u \in S(F) \, : \, \|D U u \| \leq e^a \right\}
\leq C_\ell \exp \left\{ - \sum_{j=k}^\ell (a_j - a)_+\right\}~.\]
\end{lemma}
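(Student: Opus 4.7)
The plan is to reduce the event to a smallness condition on the $k$-th singular value of a random $\ell\times k$ orthonormal frame, extract a family of determinantal bounds via Cauchy--Binet, and estimate the joint probability using the conditional Haar structure on the Stiefel manifold.

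Fix an orthonormal basis $F_0$ of $F$ and set $W := UF_0$. The hypothesis on $U$ gives that $Wc = U(F_0 c)$ is uniformly distributed on $S(\mathbb R^\ell)$ for every fixed $c \in S(\mathbb R^k)$, and the event in question is equivalent to $\{s_k(DW) \leq e^a\}$, with $s_k$ the smallest singular value. From $s_j(DW) \leq s_j(D) = e^{a_j}$ together with the Cauchy--Binet identity
\[
\det(W^\top D^2 W) \;=\; \prod_{j=1}^k s_j(DW)^2 \;=\; \sum_{|I|=k} e^{2 \sum_{i \in I} a_i} \det(W_I)^2,
\]
the event implies $|\det(W_I)| \leq \exp\bigl(a + \sum_{j=1}^{k-1} a_j - \sum_{i \in I} a_i\bigr)$ for every $k$-subset $I \subset \{1, \ldots, \ell\}$. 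Specializing to $I_j := \{1, \ldots, k-1, j\}$ for $j \geq k$ yields $|\det(W_{I_j})| \leq e^{-(a_j - a)}$, a nontrivial constraint precisely for $k \leq j \leq m := \max\{l : a_l > a\}$; for $j > m$ it is automatic from $|\det(W_I)| \leq 1$.

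Next, write $W = \left(\begin{smallmatrix} A \\ B \end{smallmatrix}\right)$ with $A$ its top $k - 1$ rows. A cofactor expansion of each $\det(W_{I_j})$ along its last row gives $\det(W_{I_j}) = \langle c(A), b_j \rangle$, where $b_j$ is the $j$-th row of $W$ and $c(A) \in \mathbb R^k$ is the vector of signed $(k-1) \times (k-1)$ minors of $A$, satisfying $Ac(A) = 0$ and $\|c(A)\|^2 = \det(AA^\top) =: R^2$. By the invariance of the uniform Stiefel law under the block rotations $\operatorname{diag}(I_{k-1}, O(\ell - k + 1))$, conditionally on $A$ the vector $(\det(W_{I_j}))_{j = k}^\ell = B c(A)$ is uniformly distributed on the sphere of radius $R$ in $\mathbb R^{\ell - k + 1}$. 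The Archimedes formula~\eqref{eq:archimedes}, applied in this ambient space, then bounds the probability that its first $q := (m - k + 1)_+$ coordinates lie in the box $\prod_{j = k}^m [-e^{-(a_j - a)}, e^{-(a_j - a)}]$ by $\min\{1,\ C_\ell\, R^{-q} \prod_{j=k}^m e^{-(a_j - a)}\}$ (using that $\ell - m \geq 2$ makes the projected sphere-density bounded, with the integrable boundary singularity of the density handling the remaining cases).

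Integrating this conditional estimate against the law of $R$ on the Stiefel manifold yields the asserted bound $C_\ell \exp\{-\sum_{j=k}^\ell (a_j - a)_+\}$. The principal obstacle is this last step: the naive split of $\mathbb E_A\bigl[\min\{1, C_\ell R^{-q} \prod_{j=k}^m e^{-(a_j-a)}\}\bigr]$ yields the desired exponent only when $q \leq 2$, whereas for $q \geq 3$ the density of $R$ near zero scales too slowly in $(k,\ell)$. To close the gap one invokes the additional Cauchy--Binet constraints on $\det(W_I)$ for $I \neq I_j$ --- or equivalently averages the conditional-sphere argument over the $\binom{m}{k - 1}$ choices of ``base'' $(k-1)$-subset of rows drawn from $\{1, \ldots, m\}$ --- so that the atypical event $\{R \ll 1\}$ for the canonical base $\{1, \ldots, k-1\}$ is absorbed by passing to a base for which the analogous radius is large.
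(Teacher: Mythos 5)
There are two genuine gaps here, and the second is fatal to the approach. First, as you yourself note, the integration of the conditional estimate over the law of $R=\sqrt{\det(AA^\top)}$ does not deliver the exponent $-\sum_{j\geq k}(a_j-a)_+$ once $q\geq 3$, and the proposed remedy (averaging over $\binom{m}{k-1}$ different ``base'' row-sets so that the atypical event $\{R\ll 1\}$ for one base is absorbed by another) is only sketched, not carried out; a priori the events $\{R_{\text{base}}\ll 1\}$ for different bases could be strongly positively correlated, so the fix is not automatic. More importantly, the step ``by the invariance of the uniform Stiefel law\ldots conditionally on $A$ the vector $Bc(A)$ is uniformly distributed on the sphere of radius $R$'' uses that $W=UF_0$ is uniform on the Stiefel manifold, i.e.\ that $U$ is Haar on $\operatorname{SO}(\ell)$. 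But the lemma only assumes that $Uu$ is uniform on $S(\mathbb R^\ell)$ for each fixed $u$ --- a strictly weaker condition that says nothing about the joint law of the columns of $UF_0$. This is not a cosmetic issue: in the paper's application (Claim~\ref{cl:1}, Section~\ref{s:2}, Section~\ref{s:3}), $U$ is Haar on the compact symplectic group $\Sp(2W,\mathbb R)\cap\operatorname{SO}(2W)\cong U(W)$, which acts transitively on the sphere but \emph{not} on $k$-frames for $k\geq 2$ (e.g.\ it preserves the complex structure, so a complex line and a Lagrangian plane lie in different orbits), so $UF_0$ is genuinely not uniform Stiefel and your conditional-sphere argument breaks.

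For contrast, the paper's proof never looks at the joint law: it establishes a single-vector bound $\mathbb P\{\|DUu\|_\infty\leq e^a\}\leq C_\ell\exp(-\sum_j(a_j-a)_+)$ via the Archimedes density~\eqref{eq:archimedes}, observes a deterministic propagation estimate (if $\|Dv\|_\infty\leq e^a$ then $\|Dv'\|_\infty\leq 2e^a$ on a box $Q_v$ whose intersection with the $(k-1)$-sphere $S(UF)$ has measure $\gtrsim\exp(-\sum_{j<k}(a_j-a)_+)$), and then converts ``exists $v$'' into a measure lower bound, applying Chebyshev and Fubini with the single-vector estimate. Only the marginal uniformity of $Uu$ for fixed $u$ is ever used, which is why the lemma is stated under that weaker --- and in the symplectic setting, the only available --- hypothesis. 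If you want to salvage the Cauchy--Binet route you would at minimum need to replace the Stiefel-uniformity input with something the hypotheses actually provide, and independently close the $q\geq 3$ integration gap.
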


\begin{proof}
It is sufficient to prove the estimate for the $\ell_\infty$ norm $\| \cdot \|_\infty$ in place of the Euclidean norm, as this will only affect the value of the numerical constant $C_\ell$. We first observe that for a fixed $u \in S(\mathbb R^\ell)$
\begin{equation}\label{eq:geom-fixed}
\mathbb P \left\{ \|DU u\|_\infty \leq e^{a} \right\} \leq C_\ell \exp(-\sum_{j=1}^\ell (a_j - a)_+)~.
\end{equation}
Indeed, let $j_0$ be such that $a_{j_0} \geq a > a_{j_0+1}$. The random vector $((Uu)_j)_{j=1}^{j_0}$ has  bounded density in a neighbourhood of zero (according to (\ref{eq:archimedes}), for $j_0 \leq \ell- 2$ the density is uniformly bounded, whereas for $j_0 = \ell - 1$ it explodes only on the boundary of the unit ball). Therefore
\[ \mathbb P \left\{ \|DU u\|_\infty \leq e^{a} \right\} =  \mathbb P \left\{ \forall 1 \leq j \leq j_0 \,\, |(DU u)_j| \leq e^{a} \right\} \leq
C_\ell \prod_{j=1}^{j_0} e^{a -a_j} = C_\ell \exp(-\sum_{j=1}^\ell (a_j - a)_+)~,\]
thus concluding the proof of (\ref{eq:geom-fixed}).

Second, we note that if $\|Dv\|_\infty \leq e^a$, then $\|Dv'\|_\infty \leq 2e^a$ for all
\[ v' \in Q_v = \{ v' \in S(\mathbb R^\ell) \, : \, |v_j' - v_j| \leq \exp(- (a_j - a)_+)\}~. \]
For any $k$-dimensional subspace $F_1 \subset \mathbb R^\ell$ and $v \in S(F_1)$, the $k-1$ dimensional measure of the intersection of $Q_v$ with $S(F_1)$ admits the lower bound
\[ \sigma_{k-1} (S(F_1) \cap Q_v)
\geq c_\ell \exp(- \sum_{j=1}^{k-1} (a_j - a)_+)~,\]
whence by  the Chebyshev inequality, the Fubini theorem and (\ref{eq:geom-fixed})
\[\begin{split}
&\mathbb P \left\{ \exists v \in S(UF) \, : \, \|Dv\|_\infty \leq e^a \right\} \\
&\quad\leq \mathbb P \left\{ \sigma_{k-1} \left\{ v' \in S(UF) \, : \, \|Dv'\|_\infty \leq 2e^a\right\} \geq c_\ell \exp(-\sum_{j=1}^{k-1} (a_j - a)_+)\right\} \\
&\quad\leq C_\ell' \exp(\sum_{j=1}^{k-1}(a_j - a)_+) \mathbb E \sigma_{k-1} \left\{ v' \in S(UF) \, : \, \|Dv'\|_\infty \leq 2e^a\right\} \\
&\quad \leq C_\ell'' \exp(\sum_{j=1}^{k-1}(a_j - a)_+) \, \exp(-\sum_{j=1}^\ell (a_j - a)_+)
=C_\ell'' \exp \left\{ - \sum_{j=k}^\ell (a_j - a)_+\right\}~.\qedhere\end{split} \]
\end{proof}

%%%%%%%%%%%%

\section{Proof of Theorem~\ref{thm:1}}\label{s:1}

For the whole proof, we fix $\lambda \in \sigma(H)$ and $\gamma > \gamma_{*,1}(\lambda)$. Choose an auxiliary small parameter $\epsilon > 0$; eventually, we shall substitute $\epsilon = \frac{1}{100 W} \min(\gamma - \gamma_{*,1}(\lambda), 1)$.

Denote
\begin{equation}\label{eq:def-omega}
\Omega_{n,\epsilon}(\lambda) = \bigcap_{1 \leq j \leq W} \bigcap _{1 \leq m_1 \leq m_2 \leq n} \Omega_{n,\epsilon}^{m_1,m_2,j}(\lambda)~,
\end{equation}
where
\[ \Omega_{n,\epsilon}^{m_1,m_2,j}(\lambda) = \left\{ \forall \lambda' \in (\lambda - r_\epsilon(\lambda), \lambda + r_\epsilon(\lambda)) \,\,\, \sum_{i=1}^j \log s_i(\Phi_{m_2,m_1}(\lambda')) \leq (m_2 - m_1) \sum_{i=1}^j \gamma_i(\lambda) + 2\epsilon j n \right\}~. \]
From Lemma~\ref{l:upperbd} (and using that $\Phi_{m_2, m_1}$ has the same distribution as $\Phi_{m_2-m_1}$) we obtain the following maximal inequality:
\begin{equation} \label{eq:p-omega}
\mathbb P(\Omega_{n,\epsilon}(\lambda))\geq 1 - C n^3 \exp(-c \epsilon^2 n)~.
\end{equation}
Let
\begin{equation}\label{eq:def-F0} F_0 = \left\{  \binom{v_1}{0} \, : \, v_1 \in \mathbb R^W \right\} \subset \mathbb R^{2W}\end{equation}
be the space of initial conditions. Denote:
\begin{equation}\label{eq:def-fast}
\Fast_{n,\epsilon}(\gamma, \lambda) = \left\{ \lambda' \in (\lambda - r_\epsilon(\lambda), \lambda + r_\epsilon(\lambda)) \,\, : \,\,
\exists v \in S(F_0)~, \,\, \| \Phi_n(\lambda') v\| \leq e^{-n \gamma}\right\}~,\end{equation}
so that for any $\tilde\gamma > \gamma$
\[ \Fast^+\big(\tilde \gamma, (\lambda - r_\epsilon(\lambda), \lambda + r_\epsilon(\lambda))\big) \subset \limsup_{n \to \infty} \Fast_{n,\epsilon}(\gamma, \lambda)~. \]
We shall prove that for sufficiently small $\epsilon$
\begin{equation}\label{eq:est-fast-1}\mathbb{P} \left\{  \Fast_{n,\epsilon}(\gamma, \lambda) \neq \varnothing \right\} \leq C e^{-cn}~; \end{equation}
by the Borel--Cantelli lemma, this estimate will imply that  almost surely
\[ \Fast^+\big(\tilde\gamma, (\lambda - r_\epsilon(\lambda), \lambda + r_\epsilon(\lambda))\big) = \varnothing~, \quad \tilde\gamma > \gamma~,\]
and thus $\gamma_*^+ \leq\gamma$.

\medskip\noindent
The proof of (\ref{eq:est-fast-1}) rests on two claims, a propagation estimate and a single-energy bound. Set $\eta = n^{-1} e^{-n(\gamma + \gamma_1(\lambda) + 4 \epsilon)}$.
\begin{cl}\label{cl:2}
On the event $\Omega_{n,\epsilon}(\lambda)$,
\begin{equation}\label{eq:propag-1}
\left. \begin{split} \lambda', \lambda'' \in (\lambda - r_\epsilon(\lambda), \lambda + r_\epsilon(\lambda))&\\
\lambda' \in \Fast_{n,\epsilon}(\gamma, \lambda)& \\
 |\lambda'' - \lambda'| \leq\eta& \end{split} \right\}  \Longrightarrow \lambda'' \in \Fast_{n,\epsilon}(\gamma - \frac{\log 2}{n}, \lambda)~.
\end{equation}
\end{cl}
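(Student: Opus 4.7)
The natural strategy is a first-order perturbation argument: I would reuse the very same witness $v \in S(F_0)$ that establishes $\lambda' \in \Fast_{n,\epsilon}(\gamma,\lambda)$ and show that on $\Omega_{n,\epsilon}(\lambda)$ one loses at most a factor of $2$ in $\|\Phi_n(\cdot)v\|$ when moving from $\lambda'$ to $\lambda''$. The starting point is the triangle-inequality bound
\[ \|\Phi_n(\lambda'')v\| \,\leq\, \|\Phi_n(\lambda')v\| \,+\, |\lambda''-\lambda'| \sup_{\tilde\lambda} \|\Phi_n'(\tilde\lambda)\|, \]
where $\Phi_n' = \partial \Phi_n / \partial \lambda$ and the supremum is over $\tilde\lambda \in (\lambda - r_\epsilon(\lambda), \lambda + r_\epsilon(\lambda))$. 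The whole task thus reduces to a deterministic, uniform-in-$\tilde\lambda$ upper bound on $\|\Phi_n'(\tilde\lambda)\|$ on the event $\Omega_{n,\epsilon}(\lambda)$.

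To obtain this bound I would differentiate the one-step transfer matrix, noting that $T_m'(\tilde\lambda) = \left(\begin{array}{cc}\mathbbm 1 & 0 \\ 0 & 0\end{array}\right)$ has operator norm $1$ and is independent of $\tilde\lambda$. The product rule applied to $\Phi_n = T_{n-1}\cdots T_0$ gives
\[ \Phi_n'(\tilde\lambda) \,=\, \sum_{m=0}^{n-1} \Phi_{n,m+1}(\tilde\lambda)\, T_m'(\tilde\lambda)\, \Phi_m(\tilde\lambda), \qquad \|\Phi_n'(\tilde\lambda)\| \,\leq\, \sum_{m=0}^{n-1} \|\Phi_{n,m+1}(\tilde\lambda)\| \, \|\Phi_m(\tilde\lambda)\|. \]
The relevant input is then the $j=1$ constituent of the intersection (\ref{eq:def-omega}): on $\Omega_{n,\epsilon}(\lambda)$ one has $\log \|\Phi_{m_1,m_2}(\tilde\lambda)\| = \log s_1(\Phi_{m_1,m_2}(\tilde\lambda)) \leq (m_2-m_1)\gamma_1(\lambda) + 2\epsilon n$ uniformly in $\tilde\lambda \in (\lambda - r_\epsilon(\lambda), \lambda + r_\epsilon(\lambda))$ and in $0 \leq m_1 \leq m_2 \leq n$. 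Substituting into the derivative bound yields $\|\Phi_n'(\tilde\lambda)\| \leq n \exp(n\gamma_1(\lambda) + 4\epsilon n)$ throughout the interval.

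Finally, with the prescribed $\eta = n^{-1} e^{-n(\gamma + \gamma_1(\lambda) + 4\epsilon)}$, the product $|\lambda'' - \lambda'| \sup_{\tilde\lambda} \|\Phi_n'(\tilde\lambda)\|$ is at most $e^{-n\gamma}$, and the triangle inequality above becomes
\[ \|\Phi_n(\lambda'')v\| \,\leq\, e^{-n\gamma} + e^{-n\gamma} \,=\, 2 e^{-n\gamma} \,=\, e^{-n(\gamma - \frac{\log 2}{n})}, \]
so that $v \in S(F_0)$ witnesses $\lambda'' \in \Fast_{n,\epsilon}(\gamma - \tfrac{\log 2}{n}, \lambda)$, as desired. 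I do not foresee a genuine obstacle in this step: once $\Omega_{n,\epsilon}(\lambda)$ is fixed, everything is deterministic and follows from a Bernstein-type interpolation already baked into the definition of $\Omega_{n,\epsilon}(\lambda)$. The point of building the $\forall \lambda'$ quantifier into (\ref{eq:def-omega}) -- which is what Lemma~\ref{l:upperbd} is designed to support -- is precisely to make this derivative estimate available uniformly across the whole interval. The real difficulty of Theorem~\ref{thm:1} lies in the complementary single-energy estimate bounding $\mathbb P\{\lambda'' \in \Fast_{n,\epsilon}(\gamma,\lambda)\}$, which is what ultimately drives the constant $\gamma_{*,1}(\lambda)$.
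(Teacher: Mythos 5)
Your proposal is correct and follows essentially the same route as the paper: the paper directly asserts the Lipschitz bound $\|\Phi_n(\lambda') - \Phi_n(\lambda'')\| \leq n e^{n(\gamma_1(\lambda)+4\epsilon)}|\lambda'-\lambda''|$ on $\Omega_{n,\epsilon}(\lambda)$ and then applies the triangle inequality with the same witness $v$, which is exactly what you obtain by differentiating the product and invoking the $j=1$ part of the definition of $\Omega_{n,\epsilon}(\lambda)$. The only difference is that you spell out the derivation of the Lipschitz constant rather than stating it directly.
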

\begin{proof} On $\Omega_{n,\epsilon}(\lambda)$, we have
\begin{equation}\label{eq:16.5} \lambda', \lambda'' \in (\lambda - r_\epsilon(\lambda), \lambda + r_\epsilon(\lambda))
\Longrightarrow \| \Phi_n(\lambda') - \Phi_n(\lambda'') \| \leq n e^{n (\gamma_1(\lambda) + 4\epsilon)} |\lambda ' - \lambda''|~, \end{equation}
hence for $|\lambda' - \lambda''| \leq \eta$ we have
\[ \| \Phi_n(\lambda') - \Phi_n(\lambda'')\| \leq e^{-n\gamma}~.\]
If $\lambda' \in \Fast_{n,\epsilon}(\gamma, \lambda)$, then there exists $v \in S(F_0)$ such that $\|\Phi_n(\lambda') v\| \leq e^{-n\gamma}$,
and then
\begin{equation}\label{eq:propag} \|\Phi_n(\lambda'') v\| \leq e^{-n\gamma} + \|\Phi_n(\lambda'')-\Phi_n(\lambda')\| \leq
e^{-n\gamma} + e^{-n\gamma} = 2e^{-n\gamma}~, \end{equation}
i.e.\ $\lambda'' \in \Fast_{n,\epsilon}(\gamma - \frac{\log 2}{n}, \lambda)$, as asserted.\end{proof}

\begin{cl}\label{cl:1} For any $\gamma>0$, $n \geq n_0$, and $\lambda'' \in (\lambda - r_\epsilon(\lambda), \lambda+r_\epsilon(\lambda))$
\begin{equation}\label{eq:singlelam-1}
\begin{split}
&\mathbb P \left\{ \lambda'' \in \Fast_{n,\epsilon}(\gamma, \lambda)~, \, \omega \in \Omega_{n,\epsilon}(\lambda) \right\} \\
&\qquad\leq C \exp\left\{ -n \left[ 2 \gamma + \big( (W-1)\gamma - \sum_{j=1}^{W-1} \gamma_{j}(\lambda) \big)_+ -  2W\epsilon \right]\right\}~.
\end{split}\end{equation}
\end{cl}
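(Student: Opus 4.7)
The plan is to encode the event $\lambda'' \in \Fast_{n,\epsilon}(\gamma, \lambda)$ via the singular value decomposition of the transfer matrix and then apply a Lemma~\ref{l:geom}-type estimate. Writing $\Phi_n(\lambda'') = U \Sigma V^*$, one has $\|\Phi_n(\lambda'') v\| = \|\Sigma V^* v\|$ for $v \in F_0$; the ``fast'' event is thus the existence of a unit vector $w$ in the random Lagrangian subspace $V^* F_0 \subset \mathbb R^{2W}$ with $\|\Sigma w\| \leq e^{-n\gamma}$. After invoking the symplectic symmetry $s_{2W+1-j}=1/s_j$ of the singular values and the partial-sum bounds supplied by $\Omega_{n,\epsilon}(\lambda)$, the exponent in the resulting probability bound will match the one claimed in (\ref{eq:singlelam-1}).

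The implementation proceeds in three steps. First, I split $\Phi_n(\lambda'') = \Phi_{n,n_0}(\lambda'')\,\Phi_{n_0}(\lambda'')$; by independence and Lemma~\ref{l:smooth}(a), the second factor has bounded density on $\Sp(2W,\mathbb R)$. Writing the singular value decomposition $\Phi_{n,n_0}(\lambda'') = \tilde U \tilde \Sigma \tilde V^*$ and setting $g = \tilde V^* \Phi_{n_0}(\lambda'')$, the identity $\|\Phi_n(\lambda'')v\| = \|\tilde \Sigma g v\|$ holds and, conditional on $\Phi_{n,n_0}(\lambda'')$, the matrix $g$ retains bounded density on $\Sp(2W,\mathbb R)$ by left invariance of Haar measure. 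Second, I transplant the proof of Lemma~\ref{l:geom} to this setting; its only probabilistic input is that for each fixed $v \in S(F_0)$, the unit vector $g v/\|g v\|$ has density bounded by a constant (depending only on $W$) on $S^{2W-1}$, which follows from pushing the compactly supported bounded density of $g$ through the submersion $\Sp(2W,\mathbb R) \to \mathbb R^{2W}$, $g \mapsto gv$, combined with the fact that $\|gv\|$ stays in a compact subinterval of $(0,\infty)$ on the support of $g$. This yields, conditionally on $\Phi_{n,n_0}(\lambda'')$,
\[ \mathbb P\bigl\{ \lambda'' \in \Fast_{n,\epsilon}(\gamma,\lambda) \bigm| \Phi_{n,n_0}(\lambda'')\bigr\} \leq C \exp\Bigl( -\sum_{j=W}^{2W} \bigl(\log s_j(\Phi_{n,n_0}(\lambda'')) + n\gamma\bigr)_+ \Bigr)~. \]
Third, I estimate the exponent on $\Omega_{n,\epsilon}(\lambda)$. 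Writing $a_l = \log s_l(\Phi_{n,n_0}(\lambda''))$ and using the symplectic symmetry $a_{2W+1-l} = -a_l$, the sum splits as $(a_W + n\gamma)_+ + \sum_{l=1}^{W}(n\gamma - a_l)_+$; since $a_W \geq 0$ (another consequence of the symplectic symmetry), the two terms involving $a_W$ satisfy $(a_W+n\gamma)_+ + (n\gamma-a_W)_+ \geq 2n\gamma$, and $\sum_{l=1}^{W-1}(n\gamma - a_l)_+ \geq ((W-1)n\gamma - \sum_{l=1}^{W-1}a_l)_+$ by the elementary inequality $\sum(x_l)_+ \geq (\sum x_l)_+$. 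Substituting the partial-sum bound $\sum_{l=1}^{W-1} a_l \leq n \sum_{l=1}^{W-1}\gamma_l(\lambda) + 2(W-1)n\epsilon$ from $\Omega_{n,\epsilon}(\lambda)$ and absorbing the resulting $O(n\epsilon)$ corrections into the $2W\epsilon$ slack yields the claim.

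The main obstacle I expect is the bounded-density statement in the second step: the non-compactness of $\Sp(2W,\mathbb R)$ means that a priori the pushforward of a bounded density on $\Sp$ under the non-proper map $g \mapsto gF_0$ to the Lagrangian Grassmannian could concentrate, and the required bound for $g v /\|gv\|$ on $S^{2W-1}$ must be extracted via the compact support of $\Phi_{n_0}$ provided by Assumption~\ref{assum} together with the fact that $g \mapsto gv$ is a submersion from $\Sp(2W,\mathbb R)$ onto $\mathbb R^{2W}\setminus\{0\}$. An alternative route is to invoke Lemma~\ref{l:smooth}(b) directly, which supplies bounded marginal density for the $V^*$ factor of the SVD of $\Phi_n(\lambda'')$ on the compact symplectic group; one then needs to argue that, on the high-probability event $\Omega_{n,\epsilon}(\lambda)$ where the singular values are localised to a controlled range, the conditional density of $V^*$ given $\Sigma$ also stays bounded.
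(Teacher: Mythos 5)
Your proposal follows the same overall strategy as the paper: split $\Phi_n(\lambda'') = \Phi_{n,n_0}(\lambda'')\,\Phi_{n_0}(\lambda'')$, invoke Lemma~\ref{l:smooth} to supply a density bound for the short factor, reduce the ``fast decay'' event to a Lemma~\ref{l:geom}-type estimate, and bound the resulting exponent on $\Omega_{n,\epsilon}(\lambda)$. Your exponent computation also agrees with the paper's: the sum $\sum_{j=W}^{2W}(\log s_j + n\gamma)_+$ is, after the symplectic reflection $a_{2W+1-j}=-a_j$, exactly the paper's $n\sum_{j=1}^{W+1}(\gamma - a_j)_+$, and the reduction to $2n\gamma + ((W-1)n\gamma - \sum_{l=1}^{W-1}a_l)_+$ is the same.

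The one genuine difference is how the randomization step is realized. The paper does not attempt to push the bounded density of $\Phi_{n_0}$ forward along $g\mapsto gv/\|gv\|$ directly. Instead it dominates the law of $\Phi_{n_0}$ (bounded density, compact support) by a constant times the uniform distribution on a large Haar ball $M = U\Sigma V^*$, and then exploits the exact $KAK$ independence of that auxiliary law: the compact factor $U$ is exactly Haar on $\Sp(2W,\mathbb R)\cap\operatorname{SO}(2W)$, $(\Sigma,V)$ is independent of it, and hence $V_{n,n_0}^*U$ is again exactly Haar-uniform on the compact group. Setting $F_1 = \Sigma V^* F_0$ (another Lagrangian subspace, deterministic given $(\Sigma,V)$), one is precisely in the hypotheses of Lemma~\ref{l:geom} with no modification needed. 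Your route instead needs the pushforward statement ``$gv/\|gv\|$ has bounded density on $S^{2W-1}$'' together with a Jacobian control for the parametrisation of $S(gF_0)$ by $S(F_0)$; this can indeed be made to work using the compact support and the submersion property, but it requires a separate argument (a co-area bound, with some care that the fiber volumes inside the compact support stay bounded) and yields a slightly weaker input (bounded density rather than exactly uniform), so you must also re-derive the analogue of (\ref{eq:geom-fixed}) rather than cite Lemma~\ref{l:geom} verbatim. You flagged exactly this as the main obstacle, correctly; the paper's domination-by-a-Haar-ball device is precisely how to sidestep it cleanly. In summary: same strategy and same exponent, but the paper's version of the randomization is simpler to justify and reuses Lemma~\ref{l:geom} as stated, whereas yours requires a supplementary pushforward lemma that you identified but did not fully prove.
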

\begin{proof}
Let $n_0$ be as in Lemma~\ref{l:smooth}, and let $M \in \Sp(2W, \mathbb R)$ be a random matrix uniformly distributed according to the restriction of the Haar measure to a sufficiently large ball in operator norm. Denote $\widetilde \Phi_n(\lambda) = \Phi_{n, n_0}(\lambda)M$. According to Lemma~\ref{l:smooth}, it suffices to show that
\begin{equation}\label{eq:singlelam-1'}
\mathbb P \left\{ s_W(\widetilde \Phi_n(\lambda'')|_{F_0}) \leq e^{-\gamma n}~, \, \omega \in \Omega_{n,\epsilon}(\lambda)  \right\}
\leq  (\text{RHS of (\ref{eq:singlelam-1})})~.
\end{equation}
Introduce the singular value decompositon
\[ \Phi_{n, n_0}(\lambda'') = U_{n, n_0}(\lambda'') \Sigma_{n, n_0}(\lambda'') V_{n, n_0}(\lambda'')^*~, \quad
M =U \Sigma V^*~,\]
so that
\[\widetilde\Phi_n(\lambda'') = U_{n, n_0}(\lambda'') \Sigma_{n, n_0}(\lambda'') \left[ V_{n, n_0}(\lambda'')^*U \right] \Sigma V^*~,\]
and let $F_1 = \Sigma V^* F_0$. If $\| \widetilde\Phi_n(\lambda'') v_0 \| \leq e^{-n\gamma}$ for some $v_0 \in S(F_0)$, then
\begin{equation}\label{eq:ev1} \| \Sigma_{n, n_0}(\lambda'') \left[ V_{n, n_0}(\lambda'')^*U\right] v_1 \| \leq e^{-n\gamma + C_1} \end{equation}
 for $v_1 = \Sigma V^* v_0 / \| \Sigma V^* v_0 \| \in S(F_1)$. Note that $\left[ V_{n, n_0}(\lambda'')^*U\right]$ is distributed uniformly on the compact symplectic group, and therefore its action on any fixed vector on the sphere is distributed uniformly on the sphere.
On the event $\Omega_{n,\epsilon}(\lambda)$, the numbers $a_j = \frac{1}{n} \log s_j(\Phi_{n_0, n}(\lambda))$ satisfy
\[ a_{2W+1-j} = - a_j~, \quad \sum_{i=1}^j a_i  \leq (1 - n_0/n) \sum_{i=1}^j \gamma_i(\lambda) + 2\epsilon j  \leq \sum_{i=1}^j \gamma_i(\lambda) + 2\epsilon W \quad (1 \leq j \leq W)~. \]
Therefore
\[\begin{split} \sum_{j=1}^{W+1} \left(\gamma - a_j\right)_+ &\geq 2 \gamma +\sum_{j=1}^{W-1} (\gamma - a_j)_+
    \\
& \geq 2 \gamma + (W-1) \big(\gamma - \frac{1}{W-1} \sum_{j=1}^{W-1} a_j\big)_+
\geq 2 \gamma + \big((W-1)\gamma - \sum_{j=1}^{W-1} \gamma_j(\lambda)\big)_+ - 2 \epsilon W~,\end{split}\]
whence
\[\sum_{j=1}^{W+1} (\gamma - \frac{C_1}{n} - a_j)_+ \geq 2 \gamma + \big((W-1)\gamma - \sum_{j=1}^{W-1} \gamma_j(\lambda)\big)_+ - 2 \epsilon W - \frac{2C_1W}{n} \]
According to Lemma~\ref{l:geom},
\[ \mathbb P\left\{ \text{(\ref{eq:ev1})} \text{ and }  \omega \in \Omega_{n,\epsilon}(\lambda) \right\} \leq
C_2 \exp\left\{ - n \left[ 2 \gamma + \big((W-1)\gamma - \sum_{j=1}^{W-1} \gamma_j(\lambda)\big)_+ - 2 \epsilon W \right] \right\}~, \]
as claimed in (\ref{eq:singlelam-1'}).
\end{proof}

Now we combine Claim~\ref{cl:2}  with Claim~\ref{cl:1} (applied to $\gamma - \log2 /n$ in place of $\gamma$) and (\ref{eq:p-omega}), and use the Fubini theorem:
\[\begin{split} &\mathbb P \left\{ \Fast_{n,\epsilon}(\gamma, \lambda) \neq \varnothing \right\} \\
&\quad \leq (1 - \mathbb P (\Omega_{n,\epsilon}(\lambda)) + 2 C r_\epsilon(\lambda) \exp\left\{ - n \left[ 2 \gamma + ((W-1)\gamma - \sum_{j=1}^{W-1} \gamma_j(\lambda))_+ - 2 \epsilon W \right] \right\}
	\eta^{-1} \\
&\quad \leq C n^3 e^{-cn} + C  n \exp\left\{ - n \left[ - \gamma_1(\lambda)  + \gamma + ((W-1)\gamma - \sum_{j=1}^{W-1} \gamma_j(\lambda))_+ - 4 \epsilon W \right] \right\}~.
\end{split}\]
For $\epsilon = \frac{1}{100 W} \min(\gamma - \gamma_{*,1}, 1)$, this expression tends to zero exponentially with $n$, thus concluding the proof of (\ref{eq:est-fast-1}) and of  Theorem~\ref{thm:1}.\qed

%%%%%%%%%%%

\section{Proof of Theorem~\ref{thm:2}}\label{s:2}

Let $\gamma > \frac{1}{3} (2\gamma_1(\lambda)+\gamma_2(\lambda))$, and let $\epsilon = \frac{1}{100} \min(\gamma_1(\lambda) - \gamma_2(\lambda), \gamma_2(\lambda), 1)$. We keep the notation $F_0$ (space of initial conditions, (\ref{eq:def-F0})), $\Omega_{n,\epsilon}(\lambda)$ (the event on which the products of singular values admit an upper bound, (\ref{eq:def-omega})), and $\Fast_{n,\epsilon}(\gamma, \lambda)$ (the set of energies $\lambda'$ in the vicinity of $\lambda$ for which there is a fast-decaying solution, (\ref{eq:def-fast})) from the previous section. Similarly to the previous section, or goal is to prove (\ref{eq:est-fast-1}), i.e.\ that $\Fast_{n,\epsilon}(\gamma, \lambda)$ is empty outside an event of exponentially small probability.

Denote by $u_j(\lambda')$ ($j=1,2,3,4$) the right singular vectors of $\Phi_n(\lambda')$ (i.e.\ the eigenvectors of $\Phi_n(\lambda')^* \Phi_n(\lambda')$; the choice of the direction of the vectors will be specified later), and by $P_{F_0}$ -- the orthogonal projection onto $F_0$. Let
\begin{align}
& \eta = \frac{1}{n} \exp(-n(\gamma - \gamma_2(\lambda)))~, \\
&A^+ = \left\{ \lambda'' \in (\lambda - r_\epsilon(\lambda), \lambda + r_\epsilon(\lambda)) \, : \, \| P_{F_0} u_1(\lambda'') \| \leq C \exp(-n(2\gamma-\gamma_1-\gamma_2-4\epsilon))\right\}~, \label{eq:def-A+}\end{align}
where $C>0$ will be specified shortly. The required estimate (\ref{eq:est-fast-1}) follows from (\ref{eq:p-omega}) and the following two ingredients: a propagation estimate
\begin{equation}\text{on $\Omega_{n,\epsilon}(\lambda)$}: \Big[ \lambda' \in A \overset{\text{def}}{=} \Fast_{n,\epsilon}(\gamma, \lambda)~, \,\, |\lambda'' - \lambda'| < \eta~, \,\,
|\lambda'' - \lambda| < r_\epsilon(\lambda) \Big] \Longrightarrow \lambda'' \in A^+ \label{eq:propag-2}\end{equation}
which replaces Claim~\ref{cl:2}, and the single-energy bound
\begin{equation}|\lambda'' - \lambda| < r_\epsilon(\lambda) \Longrightarrow \mathbb{P} \left\{ \lambda'' \in A^+\right\}  \leq C' e^{-\epsilon n} \eta \label{eq:est-2}
\end{equation}
which replaces Claim~\ref{cl:1}.

To prove (\ref{eq:propag-2}), we first observe that $\lambda'\in A$ implies that $\frac1n \log s_1(\Phi_n(\lambda')) \geq \gamma$, and hence on  $\Omega_{n,\epsilon}(\lambda)$
\begin{equation}\label{eq:ubd-emp}
\begin{split}
\frac1n \log s_2(\Phi_n(\lambda')) &= \frac1n \big(\log s_1(\Phi_n(\lambda'))  + \log s_2(\Phi_n(\lambda'))\big)- \frac1n \log s_1(\Phi_n(\lambda')) \\
&\leq \gamma_1(\lambda) + \gamma_2(\lambda) - \gamma + 4\epsilon~.
\end{split}\end{equation}
Further, $\lambda' \in A$ implies that there exists $v \in F_0$ such that for $j=1,2,3$
\[ |\langle v, u_j(\lambda') \rangle | \leq \frac{\exp(- n \gamma)}{s_j(\Phi_n(\lambda'))} \leq
\exp(- n \gamma) s_2(\Phi_n(\lambda')) \leq \exp(- n (2\gamma - \gamma_1(\lambda) - \gamma_2(\lambda) - 4\epsilon) )~.\]
These inequalities imply that (for the appropriate choice of signs)
\[ \| v - u_4(\lambda') \| \leq C_1 \exp(- n (2\gamma - \gamma_1(\lambda) - \gamma_2(\lambda) - 4\epsilon) )~.\]
Now we use the symplectic rotation $J = \left( \begin{array}{cc} 0 & - \mathbbm{1} \\ \mathbbm{1} & 0 \end{array} \right)$. The matrix $\Phi_n(\lambda')$ is symplectic, hence (up to sign) $J u_4(\lambda') = u_1(\lambda')$. Thus
\[ \| J v - u_1(\lambda') \| \leq C_1 \exp(- n (2\gamma - \gamma_1(\lambda) - \gamma_2(\lambda) - 4\epsilon) )~.\]
On the other hand, $F_0 \subset \mathbb R^{2W}$ is a Lagrangian subspace (i.e.\ $F_0 = (J F_0)^\perp$), hence $J v \perp F_0$. Consequently,
\begin{equation}\label{eq:proj-lam'} \| P_{F_0} u_1(\lambda') \|\leq  C_1 \exp(- n (2\gamma - \gamma_1(\lambda) - \gamma_2(\lambda) - 4\epsilon) )~.\end{equation}
To complete the proof of (\ref{eq:propag-2}), we need to show that the estimate (\ref{eq:proj-lam'}) does not deteriorate too fast as we vary $\lambda'$. If $|\lambda'' - \lambda'| \leq \eta$ and $|\lambda'' - \lambda| \leq r_\epsilon(\lambda)$, we have (still on $\Omega_{n,\epsilon}(\lambda)$, cf.\ (\ref{eq:16.5})):
\[ \| \Phi(\lambda'') - \Phi(\lambda') \| \leq \eta n \exp(n(\gamma_1(\lambda)+4\epsilon))~, \]
whence by Wedin's perturbation bound for singular vectors \cite{Wed}
\begin{equation}\label{eq:pert-u1}\begin{split} \| u_1(\lambda'') - u_1(\lambda')\| &\leq C_2 \frac{\eta n e^{n(\gamma_1(\lambda)+4\epsilon)}}{s_1(\Phi_n(\lambda')) - s_2(\Phi_n(\lambda'))} \\
&\leq \frac{2C_2 \eta n e^{n(\gamma_1(\lambda)+4\epsilon)}}{e^{n\gamma}} = 2C_2 \exp(-n(2\gamma - \gamma_1(\lambda) - \gamma_2(\lambda) - 4\epsilon ) ) \end{split}~.\end{equation}
On the second step we used that $s_1(\Phi_n(\lambda')) - s_2(\Phi_n(\lambda')) \geq \frac{1}{2} e^{n \gamma}$. This estimate holds (for sufficiently large $n$) since
\[ s_1(\Phi_n(\lambda')) \geq e^{n\gamma}~, \quad s_1(\Phi_n(\lambda)) s_2(\Phi_n(\lambda)) \leq e^{n(\gamma_1(\lambda) + \gamma_2(\lambda) + 4 \epsilon)}~, \]
whereas
\[\gamma > \frac{2\gamma_1(\lambda) + \gamma_2(\lambda)}{3}> \frac{\gamma_1(\lambda) + \gamma_2(\lambda)}{2}~.\]
From (\ref{eq:proj-lam'}) and (\ref{eq:pert-u1}) we obtain that $\lambda'' \in A^+$, provided that we set $C = C_1 + 2C_2$ in (\ref{eq:def-A+}). This concludes the proof of (\ref{eq:propag-2}).

To prove (\ref{eq:est-2}), we use once again that if $U$ is uniformly distributed on the compact symplectic group $\Sp(2W, \mathbb R) \cap \operatorname{SO}(2W, \mathbb R)$, then each column of $U$ is uniformly distributed on the unit sphere. Thus, according to Lemma~\ref{l:smooth}, the probability density of $u_1(\lambda'')$ with respect to the Haar measure on $S(\mathbb R^{2W})$  is bounded uniformly in $n \geq n_0$. Hence by (\ref{eq:archimedes})
\[ \begin{split} \mathbb P \left\{ \| P_{F_0} u_1(\lambda'') \| \leq C \exp(-n(2\gamma-\gamma_1-\gamma_2-4\epsilon)) \right\} &\leq C_4 \exp(-2n(2\gamma-\gamma_1-\gamma_2-4\epsilon)) \\
&\leq C_5 \exp(-\epsilon n) \eta~.\end{split}\]
This concludes the proof of (\ref{eq:est-2}) and of the theorem. \qed

%%%%%%%%%%%%

\section{On the uniform convergence to the Lyapunov exponents}\label{s:concl}

A better understanding of  the  deviations of $\frac{1}{n} \log s_j(\Phi_n(\lambda))$ from their limiting values $\gamma_j$ would allow us to strengthen the conclusion of Theorems~\ref{thm:1} and \ref{thm:2}, possibly up to the conjectured $\gamma_*^+ \overset{\text{conj}}{=} \gamma_1$, as we now discuss.

Recall the following result from \cite{G75,G80a} pertaining to $W=1$: with probability one, the set  $\Lambda_{\frac12}$, where
\[ \Lambda_{\tau} = \left\{ \lambda \in \mathbb R\, :  \, \liminf_{n \to \infty} \frac{1}{n} \log \|\Phi_n(\lambda) \| \leq \tau\gamma_1(\lambda)\right\}~, \quad \tau \in [0,1]~, \]
is dense in $\sigma(H)$. Subsequently, it was found that also the (possibly) smaller set $\Lambda_{0}$
is almost surely dense in $\sigma(H)$. Recently, a  general framework encompassing and generalising these results was developed by Gorodetski and Kleptsyn \cite{GorKl}, who also provided  detailed information on the structure of the exceptional sets $\Lambda_\tau$, and showed that
\begin{equation}\label{eq:gorkl}
\mathbb P \left\{ \forall  \lambda \in \mathbb R\, :  \, \limsup_{n \to \infty} \frac{1}{n} \log \|\Phi_n(\lambda) \| = \gamma_1(\lambda)\right\} = 1~. \end{equation}

We are not aware of a published reference discussing the extension of this problem for $W > 1$. However, it is plausible that the arguments developed in the aforementioned works could yield that
\[ \Lambda_{0}^{(W)} = \left\{ \lambda \in \mathbb R\, : \, \liminf_{n \to \infty} \frac{1}{n} \log s_W(\Phi_n(\lambda)) =0 \right\}  \]
is dense in $\sigma(H)$. It is not clear to us what would be the right counterpart of this statement for $1 \leq j \leq W-1$. If the higher  exponents would exhibit regular behaviour, i.e.\
\begin{equation}\label{eq:??}
\text{if it were true that \,} \mathbb P \left\{ \forall \lambda \in \sigma(H) \quad \lim_{n \to \infty} \frac{1}{n} \log s_j(\Phi_n(\lambda)) = \gamma_j(\lambda) \right\} =1~, \quad 1 \leq j \leq W-1~, \end{equation}
one could significantly improve the results of the current paper: the argument of Theorem~\ref{thm:1} would yield $\gamma_*^+ \leq \gamma_{*,2}$, where $\gamma_{*,2}$ is the solution of
\[ \gamma + \sum_{j=1}^{W-1} \left(  \gamma - \gamma_j\right)_+ = \gamma_1~, \]
whereas  the argument of Theorem~\ref{thm:2} would establish the optimal bound $\gamma_*^+ = \gamma_W$ (for arbitrary $W$, cf.\ the proof of Theorem~\ref{thm:3} below). If (\ref{eq:??}) is false, it would be helpful to understand
\begin{equation}\label{eq:???}
\text{is it true that \,} \mathbb P \left\{ \forall \lambda \in \sigma(H) \quad \limsup_{n \to \infty} \frac{1}{n} \log s_j(\Phi_n(\lambda)) \leq \gamma_j(\lambda) \right\} =1~, \quad 1 \leq j \leq W~. \end{equation}
Following Craig and Simon \cite{CS} (cf.\  Lemma~\ref{l:upperbd}), note  that (\ref{eq:???}) holds (unconditionally) for $j=1$. Also (according to the same lemma) (\ref{eq:??}) would imply (\ref{eq:???}).

\medskip
In this section, we prove the following extension of (\ref{eq:gorkl}) to $2W$-dimensional cocycles. We confine ourselves to the setting of transfer matrices, which is used in the proof of Theorem~\ref{thm:3}. Denote
\[ \operatorname{Dev}_{n}(\lambda) = \max_{1 \leq j \leq W} \left|\frac1n \log s_j(\Phi_n(\lambda)) - \gamma_j(\lambda)\right|~.\]
\begin{prop}\label{p:gorkl} Assume that $V(n)$ satisfy (\ref{eq:model}), and that (\ref{eq:zcond}) holds for every $\lambda \in [a,b]$. Then for any $\epsilon > 0$ there exist $C> 0$ and $c > 0$ such that
\begin{equation}
\mathbb P \left\{ \sup_{\lambda \in [a,b]} \min\left(\operatorname{Dev}_n(\lambda), \operatorname{Dev}_{n^2}(\lambda)\right) \geq \epsilon  \right\}  \leq C e^{-cn}~.
\end{equation}
In particular,
\[ \mathbb P \left\{ \sup_{\lambda \in [a,b]} \liminf_{n \to \infty} \operatorname{Dev}_n(\lambda)= 0 \right\} = 1~. \]
\end{prop}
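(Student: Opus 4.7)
The plan is to first establish the quantitative probability bound; the almost-sure statement then follows by Borel--Cantelli applied along a summable subsequence $n_k\uparrow\infty$, using $\liminf_m\operatorname{Dev}_m(\lambda)\leq \min(\operatorname{Dev}_{n_k}(\lambda),\operatorname{Dev}_{n_k^2}(\lambda))$ for every $\lambda$ and letting $\epsilon\downarrow 0$ through a countable sequence of tolerances.

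For the probability bound, the key structural observation is the block decomposition $\Phi_{n^2}(\lambda)=\Phi_{n^2,n}(\lambda)\,\Phi_n(\lambda)$, in which $\Phi_{n^2,n}$ depends on $V(n),\ldots,V(n^2-1)$ and is therefore \emph{independent} of $\Phi_n$. Applying Lemma~\ref{l:upperbd} to $\Phi_n$ at scale $n$ and to $\Phi_{n^2,n}$ at scale $n^2-n$ yields, on an event of probability $\geq 1-Cn^6 e^{-c\epsilon^2 n}$, the uniform-in-$\lambda$ upper bounds $\tfrac1n\sum_{i=1}^k\log s_i(\Phi_n(\lambda))\leq\sum_{i=1}^k\gamma_i(\lambda)+\epsilon$ and its analog for $\Phi_{n^2,n}$, from which the corresponding bound for $\Phi_{n^2}$ follows by sub-multiplicativity of exterior powers. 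On this event, the occurrence of the bad event $\{\sup_\lambda\min(\operatorname{Dev}_n,\operatorname{Dev}_{n^2})\geq\epsilon\}$ is forced to come from simultaneous \emph{lower} deviations at a common $\lambda$ of some $s_{j_1}(\Phi_n)$ and $s_{j_2}(\Phi_{n^2})$.

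Using the symplectic identity $s_{2W}(\Phi_n)=1/s_1(\Phi_n)$, the inequality $s_{j_2}(\Phi_{n^2,n}\Phi_n)\geq s_{j_2}(\Phi_{n^2,n})\,s_{2W}(\Phi_n)$, and the upper bound on $s_1(\Phi_n)$ from the previous step, the lower deviation of $s_{j_2}(\Phi_{n^2}(\lambda))$ transfers to the \emph{independent} block: for $n$ large, $\tfrac{1}{n^2-n}\log s_{j_2}(\Phi_{n^2,n}(\lambda))\leq \gamma_{j_2}(\lambda)-\epsilon/(4W)$. Hence the bad event is contained in $\mathcal{E}'=\{\exists\lambda\in[a,b]:\lambda\in B_n\cap B'_{n^2-n}\}$, where $B_n$ is measurable with respect to $\Phi_n$ and $B'_{n^2-n}$ is measurable with respect to $\Phi_{n^2,n}$; the two sets are independent. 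By independence and Lemma~\ref{l:largedev} at both scales, the single-energy bound is $\mathbb{P}(\lambda\in B_n\cap B'_{n^2-n})\leq C^2 e^{-c\epsilon^2 n^2}$ for each fixed $\lambda$.

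The main obstacle is the propagation step needed to turn this fixed-$\lambda$ bound into a uniform-in-$\lambda$ one along the lines of the Kakutani--Aizenman scheme from the introduction. A naive Lipschitz argument gives a propagation width only $\sim e^{-n^2\gamma_1}$, which is insufficient for small $\epsilon$. To overcome this, I would exploit the polynomial structure exactly as in the proof of Lemma~\ref{l:upperbd}: the entries of $\Phi_n$ and $\Phi_{n^2,n}$ are polynomials in $\lambda$ of degrees $\leq n$ and $\leq n^2-n$ respectively, so Bernstein's inequality controls their suprema over $[a,b]$ by values at $O(n^2)$ Chebyshev nodes (up to a $\log n$ factor). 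Representing the relevant lower-tail quantity through finitely many matrix elements $\langle u,\Lambda^j\Phi v\rangle$ with $u,v$ ranging over a fixed $\delta$-net of simple wedges in $\Lambda^j\mathbb{R}^{2W}$, and invoking the isotropic strengthening of the large-deviation bound in Remark~\ref{rem:largedev}, one replaces the prohibitive exponentially-many net points by polynomially-many; the resulting union bound yields $\mathbb{P}(\mathcal{E}')\leq Ce^{-cn}$, as required.
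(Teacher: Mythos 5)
Your overall scaffolding matches the paper's: you decompose $\Phi_{n^2}=\Phi_{n^2,n}\Phi_n$, exploit the independence of the two blocks, use Lemma~\ref{l:upperbd} to rule out uniform \emph{upper} deviations, and correctly reduce the problem to simultaneous \emph{lower} deviations at a common $\lambda'$ which, via the symplectic identity $s_{2W}=1/s_1$, can be transferred to the independent block. You also correctly identify the obstacle: the naive Lipschitz propagation width is hopeless, and the remaining difficulty is to convert a fixed-$\lambda$ lower-deviation bound into a statement over all $\lambda\in[a,b]$.

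Where the proposal goes wrong is the resolution of that obstacle. You invoke Bernstein/Chebyshev-node interpolation ``exactly as in the proof of Lemma~\ref{l:upperbd},'' but that machinery only says that the \emph{sup} of a degree-$d$ polynomial over $[a,b]$ is controlled (up to $\log d$) by its values at $O(d)$ nodes. That controls upper deviations — you already have those from Lemma~\ref{l:upperbd}. It says nothing about whether the polynomial becomes abnormally \emph{small} somewhere between the nodes, which is exactly what a lower deviation of $\sum_{i\le j}\log s_i$ amounts to; a union bound over Chebyshev nodes does not capture the quantifier ``$\exists\lambda$'' in the bad event. The fixed $\delta$-net of wedges likewise controls the vector direction, not the $\lambda$-dependence.

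What the paper actually does is qualitatively different and is the key missing idea. Via Remark~\ref{rem:largedev}, one reduces the lower bound on $\sum_{i\le j}\log s_i(\Phi_n(\lambda'))$ to a lower bound on a single fixed matrix element of $\Phi_n(\lambda')^{\wedge j}$ (a compression to fixed isotropic subspaces), which is a polynomial in $\lambda'$ of degree $\le jn$. Since a polynomial of degree $d$ crosses any level at most $d$ times, the bad set $A_n^{(j)}$ where this polynomial is too small is a union of $\lesssim n$ intervals; similarly $A_{n^2,n}^{(j)}$ is a union of $\lesssim n^2$ intervals, and these two random sets are \emph{independent}. The uniform-in-$\lambda$ statement then follows from the observation that if two finite unions of intervals intersect, an endpoint of one lies in the other: conditioning on one block, there are only $O(n^2)$ (random but now fixed) endpoints at which to evaluate the single-energy bound for the other, and the union bound gives $O(n^2 e^{-cn})$. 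This ``few-intervals + endpoint + independence'' argument is the actual mechanism replacing the Kakutani--Aizenman propagation scheme here, and it is not recoverable from Bernstein interpolation. You would need to articulate this step to make the proof work; as written, the last paragraph does not close the gap.
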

\begin{rmk} Here $n^2$ can be replaced with any function tending to infinity faster than linearly.
\end{rmk}

\begin{proof}[Proof of Proposition~\ref{p:gorkl}] Fix $\lambda \in \mathbb R$; let $\epsilon > 0$, and choose $r_\epsilon(\lambda)$ as in (\ref{eq:def-r}). It will suffice to show that there exist $C, c$ such that
\begin{equation}\label{eq:need-gk}
\mathbb P \left\{ \sup_{|\lambda' - \lambda| < r_\epsilon(\lambda)} \min\left(d_n(\lambda'), d_{n^2}(\lambda')\right) \geq 10 W \epsilon  \right\}  \leq C e^{-cn}~,
\end{equation}
where
\begin{equation}\label{eq:dn}d_n(\lambda') = \max_{1 \leq j \leq n} \left|\frac1n \log s_j(\Phi_n(\lambda')) - \gamma_j(\lambda)\right|~.\end{equation}
By the existence of a fractional moment (\ref{eq:model}) and the Chebyshev inequality, one can choose $\kappa > 0$ such that
\[ \mathbb P (\Omega^{(1)}_n) \geq 1 - e^{-n}~, \quad \Omega^{(1)}_n = \left\{ \sup_{|\lambda' - \lambda| < r_\epsilon(\lambda)} \|\Phi_n(\lambda')\| \leq e^{\kappa n} \right\}~.\]
On $\Omega^{(1)}_n$,
\[ \left|\log s_j(\Phi_{n^2}(\lambda'))  - \log s_j(\Phi_{n^2, n}(\lambda')) \right| \leq  \kappa n~,\]
therefore for sufficiently large $n$
\[ d_{n^2}(\lambda') \leq \epsilon + \tilde d_{n^2}(\lambda')~, \quad \tilde d_{n^2}(\lambda')  = \max_{1 \leq j \leq n} \left|\frac1{n^2-n} \log s_j(\Phi_{n^2,n}(\lambda')) - \gamma_j(\lambda)\right|~. \]
Here $\tilde d_{n^2}(\cdot)$ is independent of $d_n(\cdot)$.  Also recall from Lemma~\ref{l:upperbd} that $\mathbb P(\Omega_n^{(2)}) \geq 1 - C e^{-cn}$, where
\[ \Omega_n^{(2)} = \left\{ \forall  \lambda' \in  (\lambda - r_\epsilon(\lambda),   \lambda + r_\epsilon(\lambda))~, \,  1 \leq j \leq n\,\, : \,\, \frac{1}{n} \sum_{i=1}^j \log s_i(\Phi_n(\lambda')) \leq \sum_{i=1}^j \gamma_i(\lambda) + 2j \epsilon \right\} ~.\]
Now, Lemma~\ref{l:largedev} implies that for each $\lambda'\in  (\lambda - r_\epsilon(\lambda), \lambda + r_\epsilon(\lambda))$
\begin{equation}\label{eq:fromldbd}\mathbb P \left\{ |\frac{1}{n} \log |||\Phi_n(\lambda')||| - \gamma_1(\lambda) | \geq 2\epsilon  \right\} \leq C \exp(-c n)~, \end{equation}
where $|||X||| = \max_{\alpha,\beta} |X_{\alpha,\beta}|$ (the $\ell_1$ to $\ell_\infty$ norm).
Note that each matrix entry of $\Phi_n(\lambda')$ is a polynomial in $\lambda'$ of degree $n$, therefore the set $A_n^{(1)}$ of $\lambda' \in  (\lambda - r_\epsilon(\lambda),\lambda + r_\epsilon(\lambda))$ for which $|\frac{1}{n} \log |||\Phi_n(\lambda')||| - \gamma_1(\lambda) | \geq 2\epsilon$ is a union of at most $W^2 n$ intervals. Applying the same argument to the wedge powers $\Phi_n(\lambda)^{\wedge j}$, we construct the sets $A_n^{(2)}, \cdots, A_n^{(W)}$ such that $A_n^{(j)}$ is a union of at most $C_j(W) n$ intervals (where $C_j(W)$ may depend only on $j$ and $W$),
\begin{equation}\label{eq:prob-An} \mathbb P \left\{ \lambda' \in A_n^{(j)}\right\} \leq C e^{-cn}~, \end{equation}
and
\[ \lambda' \in (\lambda - r_\epsilon(\lambda), \lambda + r_\epsilon(\lambda)) \setminus A_n^{(j)}
\Longrightarrow \frac1n \sum_{i=1}^j \log s_i(\Phi_n(\lambda')) \geq \sum_{i=1}^j \gamma_i(\lambda) - 2j \epsilon~.\]
We also construct similar sets $A_{n^2,n}^{(j)}$ corresponding to $\Phi_{n^2,n}(\lambda')$, and let
\[ A_n = \bigcup_{j=1}^W A_n^{(j)}~, \quad A_{n^2,n} = \bigcup_{j=1}^W A_{n^2,n}^{(j)}~. \]
The set $A_n$ is a union of $\leq C(W) n$ intervals, whereas $A_{n^2,n}$ is a union of $\leq C(W) n^2$ intervals. If these two sets intersect, than either one of the edges of the intervals comprising $A_n$ lies in $A_{n^2,n}$, or vice versa. Invoking (\ref{eq:prob-An}), we see that
\[\mathbb P (\Omega_n^{(3)}) \geq 1 - C e^{-cn}~, \quad \text{where} \quad \Omega_n^{(3)} = \left\{ A_n \cap A_{n^2, n} = \varnothing \right\} ~. \]
Observe that on $\Omega_n^{(1)} \cap \Omega_n^{(2)}  \cap \Omega_n^{(3)}$, for each $\lambda'$, either $\frac{1}{n} \log s_j(\Phi_n(\lambda'))$ is close to $\gamma_j(\lambda)$ for all $j$, or this holds true for $\frac{1}{n^2} \log s_1(\Phi_{n^2}(\lambda'))$. This concludes the proof of the proposition.
\end{proof}

%%%%%%%%%%%%%

\section{Proof of Theorem~\ref{thm:3}}\label{s:3}
We keep the notation from the previous sections. Let $\gamma > \gamma_W(\lambda)$, and let $\epsilon = \frac{1}{100 W^2} (\gamma - \gamma_W(\lambda))$. It suffices to show that
\begin{equation}\label{eq:cl-3}
\mathbb P \left\{ \Fast_{n,\epsilon}(\gamma, \lambda) \cap \Fast_{n^2,\epsilon}(\gamma, \lambda) \neq \varnothing  \right\}
\leq C e^{-cn}~.
\end{equation}
To keep the notation consistent with the previous sections, it will be convenient to rely on the estimate (\ref{eq:need-gk}) rather than on the conclusion of Proposition~\ref{p:gorkl}. Denote
\[ \Reg_{n,\epsilon}(\lambda) = \left\{ \lambda' \in (\lambda - r_\epsilon(\lambda),\lambda + r_\epsilon(\lambda)) \, : \,
d_n(\lambda') < 10W\epsilon\right\}\]
where $d_n$ are as in (\ref{eq:dn}). From (\ref{eq:need-gk}),
\[ \mathbb P\left\{ \Reg_{n,\epsilon}(\lambda) \cup \Reg_{n^2,\epsilon}(\lambda) =  (\lambda - r_\epsilon(\lambda),\lambda + r_\epsilon(\lambda))\right\} \geq 1 - C e^{-cn}~. \]
Therefore (\ref{eq:cl-3}) and the  theorem are implied by (\ref{eq:p-omega}) and the following estimate:
\begin{equation}\label{eq:est-3}\mathbb P \left\{\Fast_{n,\epsilon}(\gamma, \lambda) \cap \Reg_{n,\epsilon}(\lambda)  \neq \varnothing; \omega \in \Omega_{n,\epsilon}(\lambda) \right\} \leq C e^{-cn}~.
\end{equation}
The proof of (\ref{eq:est-3}) is similar to the argument in Section~\ref{s:2}. Denote
\[ A = \Fast_{n,\epsilon}(\gamma, \lambda) \cap \Reg_{n,\epsilon}(\lambda)~, \quad \eta = \frac1n \exp(- (\gamma - \gamma_W(\lambda) + 20W^2 \epsilon )n)~, \]
and let $A^+$ be the set of $\lambda'' \in (\lambda - r_\epsilon(\lambda), \lambda + r_\epsilon(\lambda))$ for which there exists
\begin{equation}\label{eq:w-3}w \in S(\operatorname{span}(u_1(\lambda''), \cdots, u_{W-1}(\lambda'')))~, \,\,  \| P_{F_0} w \| \leq C \exp(-n (\gamma - \gamma_W(\lambda) - 10 W \epsilon))~, \end{equation}
where $C>0$ will be specified later.
We claim that on $\Omega_{n,\epsilon}(\lambda)$ we have the propagation estimate
\begin{equation}\label{eq:propag-3}\lambda' \in A~, \,\, |\lambda'' - \lambda'| < \eta~,\,\,|\lambda'' - \lambda | < r_\epsilon(\lambda) \Longrightarrow \lambda'' \in A^+~,\end{equation}
and that for each $\lambda'' \in (\lambda - r_\epsilon(\lambda), \lambda + r_\epsilon(\lambda))$
\begin{equation}\label{eq:prob-3}
\mathbb P \left\{ \lambda'' \in A^+ \right\} \leq C' \exp(-2n (\gamma - \gamma_W(\lambda) - 10 W \epsilon))
\end{equation}
These two claims imply (\ref{eq:est-3}) and thus conclude the proof of  the theorem.

To prove (\ref{eq:propag-3}), we observe that if $\lambda' \in A$, there exists $v \in S(F_0)$ such that for all $1 \leq j \leq W+1$
\[  | \langle v, u_j(\lambda') \rangle| \leq \frac{\exp(-n \gamma)}{s_j(\Phi_n(\lambda'))}
\leq \exp(-n(\gamma - \gamma_W(\lambda) - 10 W \epsilon))~, \]
where $u_j(\lambda')$ is the $j$-th right singular vector of $\Phi_n(\lambda')$, and thus there exists $\theta \in S(\mathbb R^{W-1})$ such that
\[ \| v - \sum_{j=1}^{W-1} \theta_j u_{2W+1-j}(\lambda')\| \leq C_1 \exp(-n(\gamma - \gamma_W(\lambda) - 10 W \epsilon))~.\]
Now let $w = Jv$, where $J$ is the symplectic rotation. Then $w \perp F_0$, and
\begin{equation}\label{eq:w-prop}\| w - \sum_{j=1}^{W-1} \theta_j u_{j}(\lambda')\| \leq C_1 \exp(-n(\gamma - \gamma_W(\lambda) - 10 W \epsilon))~. \end{equation}
Applying Wedin's bound to the $j$-th wedge power of $\Phi_n(\lambda')$, we have:
\[\begin{split}
&\|u_1(\lambda'') \wedge u_2(\lambda'') \wedge \cdots \wedge u_j(\lambda'') -  u_1(\lambda') \wedge u_2(\lambda') \wedge \cdots \wedge u_j(\lambda') \| \\
&\quad \leq \frac{C_2 \eta n e^{(\gamma_1(\lambda) + \cdots + \gamma_j(\lambda) + 4 W \epsilon)n}}{e^{(\gamma_1(\lambda) + \cdots + \gamma_j(\lambda) -10W^2 \epsilon)n}} \leq C_2 \eta n  e^{12 W^2 \epsilon n}
\leq C_2 e^{- n(\gamma - \gamma_W(\lambda) - 10 W \epsilon)}~,\end{split}\]
and consequently
\[ \| u_j(\lambda'') - u_j(\lambda')\| \leq C_3 e^{- n(\gamma - \gamma_W(\lambda) - 10 W \epsilon)}~.\]
This and (\ref{eq:w-prop}) implies
\begin{equation}\label{eq:w-prop'}\| w - \sum_{j=1}^{W-1} \theta_j u_{j}(\lambda'')\| \leq C_4 \exp(-n(\gamma - \gamma_W(\lambda) - 10 W \epsilon))~, \end{equation}
i.e.\ $\lambda '' \in A^+$ (if $C$ in (\ref{eq:w-3}) is chosen appropriately), as claimed in (\ref{eq:propag-3}).

Now we prove (\ref{eq:prob-3}). If
\begin{equation}\label{eq:prop-alpha}
\| P_{F_0} \sum_{j=1}^{W-1} \theta_j u_{j}(\lambda'') \| \leq C \exp(-n(\gamma - \gamma_W(\lambda) - 10 W \epsilon)) \end{equation}
for a certain $\theta \in S(\mathbb R^{W-1})$, then
\[\| P_{F_0} \sum_{j=1}^{W-1} \theta_j' u_{j}(\lambda'') \| \leq 2C \exp(-n(\gamma - \gamma_W(\lambda) - 10 W \epsilon)) \]
for all $\theta'$ in a neighbourhood of $\theta$ on $S(\mathbb R^{W-1})$; the $(W-2)$-dimensional volume of this neighbourhood is bounded from below by
\[ c \exp(-(W-2) n(\gamma - \gamma_W(\lambda) - 10 W \epsilon))~. \]
 On the other hand, (\ref{eq:archimedes}) implies that for each $\theta\ \in S(\mathbb R^{W-1})$
\[ \mathbb P \left\{ \| P_{F_0} \sum_{j=1}^{W-1} \theta'_j u_{j}(\lambda'') \| \leq 2C \exp(-n(\gamma - \gamma_W(\lambda) - 10 W \epsilon)) \right\} \leq C' \exp(-n W(\gamma - \gamma_W(\lambda) - 10 W \epsilon))~. \]
Therefore the probability that there exists $\theta$ satisfying (\ref{eq:prop-alpha})  is at most
\[ \frac{C'}{c} \exp(-2n (\gamma - \gamma_W(\lambda) - 10 W \epsilon))~,\]
as claimed. This concludes the proof of (\ref{eq:prob-3}) and of the theorem.
\qed

\paragraph{Acknowledgement} We are grateful to Yotam Hendel and Itay Glazer for explaining us the argument used in the proof of Lemma~\ref{l:smooth}, and to Fulvio Ricci for helpful comments on the work \cite{RS}.

\end{document}